\documentclass[11pt]{article}

\RequirePackage[colorlinks,citecolor=blue,urlcolor=blue, hypertexnames=false]{hyperref}
\usepackage{amsthm}
\usepackage{amsmath}
\usepackage{amsfonts}
\usepackage{amssymb}
\usepackage{apacite}
\usepackage{color}
\usepackage{enumerate}
\usepackage{framed}
\usepackage{float}
\usepackage[bottom,symbol]{footmisc}
\usepackage{graphicx} 
\usepackage{mathtools}
\usepackage{multirow} 
\usepackage[]{natbib} 
\usepackage{url} 
\usepackage{xr}  
\usepackage{algorithm}
\usepackage{algpseudocode} 
\usepackage{setspace}
\usepackage[margin=1in]{geometry}

\usepackage{multirow}
\usepackage{booktabs}

\usepackage{enumitem} 
\usepackage{subcaption} 

\def\spacingset#1{\renewcommand{\baselinestretch}%
{#1}\small\normalsize} \spacingset{1}

\newtheorem{theorem}{Theorem}
\newtheorem{prop}[theorem]{Proposition}

\theoremstyle{definition}

\newcommand{\E}{\mathbb{E}}

\newcommand{\cM}{\mathcal{M}}
\newcommand{\cS}{\mathcal{S}}
\newcommand{\N}{\mathbb{N}}
\newcommand{\cN}{\mathcal{N}}

\renewcommand{\P}{\mathbb{P}}

\newcommand{\cQ}{\mathcal{Q}}
\newcommand{\R}{\mathbb{R}}

\newcommand{\cY}{\mathcal{Y}}

\newcommand{\hP}{\widehat{P}}
\newcommand{\wpi}{\widehat{\pi}}

\title{Adaptive Importance Tempering: A flexible approach to improve computational efficiency of Metropolis Coupled Markov Chain Monte Carlo algorithms on binary spaces} 

\author{\large R. Alexander Valencia-Sanchez \textsuperscript{1}, Jeffrey S.\ Rosenthal\textsuperscript{1},\\ Yasuhiro Watanabe \textsuperscript{2}, Hirotaka Tamura\textsuperscript{3} \& Ali Sheikholeslami\textsuperscript{4}}

\footnotetext[1]{Department of Statistical Sciences, University of Toronto, Canada}
\footnotetext[2]{Fujitsu Limited, Kanagawa, Japan}
\footnotetext[3]{DXR Laboratory Inc., Kanagawa, Japan}
\footnotetext[4]{Department of Electrical and Computer Engineering, University of Toronto, Canada}

\date{} 
\begin{document}

\spacingset{1.25}  

\maketitle

\begin{abstract}
\noindent Based on the algorithm Informed Importance Tempering (IIT) proposed by \cite{IIT-Li2023} we propose an algorithm that uses an adaptive bounded balancing function. We argue why implementing parallel tempering where each replica uses a rejection free MCMC algorithm can be inefficient in high dimensional spaces and show how the proposed adaptive algorithm can overcome these computational inefficiencies. We present two equivalent versions of the adaptive algorithm $-$ A-IIT and SS-IIT $-$ and establish that both have the same limiting distribution, making either suitable for use within a parallel tempering framework. To evaluate performance, we benchmark the adaptive algorithm against several MCMC methods: IIT, Rejection free Metropolis-Hastings (RF-MH) and RF-MH using a multiplicity list. Simulation results demonstrate that Adaptive IIT identifies high-probability states more efficiently than these competing algorithms in high-dimensional binary spaces with multiple modes.
\end{abstract}

\noindent%
{\it Keywords:}  Markov chain Monte Carlo, adaptive Markov chain, computational methods, Informed proposals, Parallel Tempering.

\section{Introduction}\label{sec:intro}  

Consider a probability mass function (PMF) $\pi$ over a finite discrete state space $\cS$. Given a measurable function $g$,  we want to compute an estimate of $\mathbb{E}_{\pi}(g(X))$.  Another way to look at this problem is considering that we have a non-negative function $\pi$ defined over a finite discrete space $\cS$ and we want to obtain samples from a probability distribution which PMF is proportional to the function $\pi$ for which we don't know the normalizing constant.

A standard approach to approximate the desired value is to use a Markov Chain Monte Carlo (MCMC) algorithm which proposes to build a Markov chain $\{X_n\}$ that converges to the target distribution $\pi$,  get $N$ samples from the chain and use them to create a Monte Carlo estimate.
\newpage
An example of a MCMC algorithm is Metropolis-Hastings (MH) algorithm \citep{metropolis, hastings} which defines a way to define a Markov chain that converges to the desired target distribution by using the un-normalized function $\pi(x)$ and a proposal distribution $\mathcal{Q}(y\mid x)$. 

Rejection of some proposed states is an important part of this algorithm, in some settings it's been shown that an optimal implementation of the algorithm will reject a bit more than 75\% of the proposals \citep{roberts_1997, roberts_rosenthal_2001, optimal_scaling_distontinuous_dist_neal}. The rejection rate can also be very high for high-dimensional state spaces which results in low efficiency for the algorithm \citep{IIT-Li2023}.

These rejections might be seen as an inefficiency of the algorithm as we are performing computations that don't translate in new information for the Markov chain. To address this situation, it is possible to define a Rejection-Free version of the MH algorithm \citep{rosenthal2020jump}. This algorithm defines a new Markov chain that always accepts a move to a proposed state and allows to make inference of the target distribution $\pi$. The algorithm identifies the set of neighbor states of the current state, evaluates the target distribution on each neighbor state and modify the proposal probability with the computed information. 


This Rejection-Free MH algorithm is one instance of the class of Informed Importance Tempering (IIT) algorithms proposed by \citet{Zanella2020} as it uses information of the local neighbors of the current state to modify the proposal distribution. The advantage of IIT become clear when we have access to parallel computing. It has been shown that this class of algorithms performs well in high dimensional settings \citep{zhou2022} and that it is very simple to implement an IIT version of different MCMC algorithms \citep{IIT-Li2023}

In this work we explore how we can create an algorithm that combines IIT with Parallel Tempering (PT) and show it's effectiveness using simulations in binary multi modal state spaces. The most direct way of combining the two algorithms was already discussed by \citet[][sec. 6]{rosenthal2020jump} where they implement parallel tempering where each replica uses a rejection free Metropolis Hastings algorithm and they modify the replica swap probability to correct the bias generated by the rejection free chains. This implementation, although straightforward, increases the number of computations needed to try a replica swap in a factor proportional to the dimension of the space. 

In order to address this computational inefficiency we propose an algorithm called Adaptive IIT (A-IIT). This algorithm creates a chain with  Markovian adaptations \citep{adaptive-mcmc-roberts-rosenthal} that uses the information of the neighbors of the current state to propose a rejection free move to the next state but makes the computation of the probability of a replica swap more efficient as it doesn't need additional factors to compute  the replica swap probability.



\subsection{Overview of the paper}

In section 2 we present a brief summary of MCMC, rejection free algorithms and Informed Importance Tempering that will be useful for the definition of the adaptive algorithm.

In section 3 we present the adaptive algorithm (A-IIT) and we prove that the Markov chain defined by this algorithm converges to a probability distribution and we show how to use this adaptive Markov chain to obtain an estimator for $\mathbb{E}_{\pi}(f(X))$. 

In section 4 we briefly discuss why using a rejection-free Markov chain with parallel tempering creates additional computations for the computation of the replica swap probability and how we can avoid these extra computations without affecting the convergence of the chain.

In section 5 we present more details on the implementation of the algorithms in binary multimodal spaces. We present the second version of the algorithm: Single step IIT (SS-IIT), we explain why it's equivalent to A-IIT and how we can use both in an implementation with Parallel Tempering to reduce the number of computations needed to update the Markov chain. We present a discussion on the chosen balancing function $h(r)=\sqrt{r}$.

In section 6 we present the details of the simulations performed and show the results in binary multimodal state spaces. We compare the performance of A-IIT with other MCMC algorithms. We show that in low dimensional state spaces MH still performs better than A-IIT in terms of convergence in Total Variation Distance (TVD) and in terms of how many iterations it takes for the chains to visit the modes. We present results in a binary space of dimension 3000 where A-IIT performs better than MH and IIT, we discuss how we can attribute this improvement to both the use of the balancing function $h(r)=\sqrt{r}$ and the use of a multiplicity list to reduce the number of computations needed to update the Markov chain. Additional details of these simulations and results on state spaces of other dimension (1000, 5000, 7000) can be found in the appendix.

\section{MCMC algorithms}
To build an estimate of $\mathbb{E}_{\pi}(f(X))$ Markov Chain Monte Carlo proposes to build a Markov chain $\{X_n\}$ that converges to the target distribution $\pi$,  get $N$ samples from that distribution and then use the Monte Carlo method to estimate the desired value as follows.

$$\mathbb{E}_{\pi}(f(X)) \approx \frac{1}{N}\sum_{i=1}^N f(X_i)$$

in general we can write this estimator in the following form.
\begin{equation}\label{eq:weighted_estimator}
  \widehat{\mathbb{E}_{\pi}(f(X))}=\frac{\sum_{i=1}^N w_i(X_i)f(X_i)}{\sum_{i=1}^N w_i(X_i)}  
\end{equation}

Where $w_i(X_i)$ is the weight assigned to the state $X_i \in S$ when the Markov chain visits that state at iteration $i$.

To build this estimator we can see that we need two components: First a process to explore the state space and while exploring, a function to define the weight of each visited state.

Metropolis-Hastings algorithm, for example, assigns $w(X_i)=1$ to every state $X_i$ every time that the chain is visited. The exploration of the state space depends on the proposal distribution and the estimation of the weight depends on the rejections happening during that exploration.

If we modify the exploration process then the weight estimation also has to be modified so the chain still converges to the desired target distribution.

\subsection{Rejection free algorithms}

As stated before, an optimal implementation of Metropolis-Hastings algorithm will consistently reject proposed states. \citet{rosenthal2020jump} present a Rejection-free version of MH algorithm. Consider a Markov chain $\{X_n\}$ with transition probabilities $P(y|x)=P(X_{n+1}=y|X_{n}=x)$. Then we can build the corresponding jump Markov chain $\{J_n\}$ that has transition probabilities given by 

$$\hP(y|x)=P(J_{n+1}=y|J_n=x)=\frac{P(y|x)}{Z(x)}$$

Where $Z(x)$ represents the probability that the chain "escapes" state $x$ in a single iteration.

\begin{equation*}
    Z(X)=\sum_{y\neq x}P(y|x)=1-P(x|x)
\end{equation*}

The jump chain inherits important properties from the original chain, such as being irreducible and reversible. Moreover if the original chain's stationary distribution is $\pi$ then the stationary distribution of the jump chain is given by $\wpi(x)=Z(x)\pi(x)$.

Along with the jump chain authors define the multiplicity list $\{M_k\}$ that represents the number of iterations the original chain stayed in its current state before accepting a move to a different state. The distribution of $M_k$ given $J_k$ is equal to the distribution of $1+G$ where $G\sim \text{Geometric}(Z(x))$.

Authors present two possible ways to define the weight of a state when using a rejection free algorithm. In particular theorem 8 and 12 of \cite{rosenthal2020jump} prove that using $w_k(X_k)=M_k$ or $w_k(X_k)=1/Z(X_k)$ in equation \ref{eq:weighted_estimator} creates a consistent estimator for $\E_{\pi}(f(X))$.

The main advantage of using the multiplicity list is that it allows to recover samples of the original Markov chain. This aspect of the multiplicity list is an important part of the implementation of A-IIT that will be presented in the next section.

\subsection{Informed Importance Tempering}

 A generalization of rejection-free algorithms is presented by \citet{Zanella2020} introducing the concept of balancing function. This function measures the importance of neighbor states and uses this information to modify the way the chain explores the state space.

To build a Markov chain that follows IIT we will start with a proposal distribution $\cQ(y|x)$ that defines the set of neighbor states of a state $x$ as $\cN_x=\{y:\cQ(y|x)>0\}$. Then we consider a balancing function, which is a function $h:(0,\infty) \rightarrow (0,\infty)$ that satisfies $h(r) = rh(r^{-1})$.
For the states $y$ in the neighborhood of $x$ we consider a function $\alpha(x,y)\geq 0$ that represents the preference of choosing $y$, which then is used to modify the proposal weight of that state $\eta(y|x)$.

\begin{equation}\label{eq:weight-function}
    \eta_h(y|x)=\cQ(y|x)\alpha_h(x,y),\qquad \qquad \alpha_h(x,y)=h\left(\frac{\pi(y)\mathcal{Q}(x|y)}{\pi(x)\mathcal{Q}(y|x)}\right)
\end{equation}

Algorithm \ref{alg:IIT-weight} shows the implementation of IIT.

\begin{algorithm}
\caption{Informed importance tempering (Naive IIT) }\label{alg:IIT-weight}
\begin{algorithmic}
\State Initialize $X_0$
\For{$k$ in $0$ to $K$}
    \State For each $Y \in \mathcal{N}_{X_{k}}$, calculate $$\eta_h(Y|X_{k})=\mathcal{Q}(Y|X_{k}) h\left(\frac{\pi(Y) \mathcal{Q}(X_{k}|Y)}{\pi(X_{k}) \mathcal{Q}(Y|X_{k})}\right)$$ 
    \State Calculate $$Z_h(X_{k})=\sum_{Y\in \mathcal{N}_{X_{k}}}\eta_h(Y|X_{k})$$
    
    \State Set $w_k(X_{k}) \leftarrow 1/ Z_{h}(X_{k})$.  
    \State Choose the next state $Y\in \mathcal{N}_{X_{k}}$ such that
    $$P(X_{k+1}=Y|X_{k}) \propto\mathcal{Q}(Y|X_{k}) h\left(\frac{\pi(Y) \mathcal{Q}( X_{k}|Y)}{\pi(X_{k}) \mathcal{Q}(Y|X_{k})}\right)$$ 
\EndFor
\end{algorithmic}
\end{algorithm}

This is very similar to how rejection-free Metropolis-Hastings algorithm is implemented but we can choose to use a different function, as long as the function satisfies the balancing function equality this guarantees that the resulting chain is reversible and hence converges to the desired target distribution.

Choosing a different balancing function allows the algorithm to modify the way the Markov chain explores the state space. If we use the balancing function $h(r)=\min\{1,r\}$ then IIT is equivalent to Rejection-Free Metropolis-Hastings \citep{rosenthal2020jump}, but there's no need to use a bounded balancing function, authors mention other valid balancing functions such as $h(r)=\sqrt{r}$ (\textit{squared root}), $h(r)=1 \wedge r$ (\textit{min}), $h(r)=1 \vee r$ (\textit{max}), $h(r)=(1 \wedge re^{-c})\vee (r\wedge e^{-c})$. If we choose a different balancing function, this modifies the proposal weight of the states $\eta_{h}$ which then modifies the exploration of the state space.

In figure \ref{fig:balancing_compare} we compare how these balancing functions measure the importance of neighbor states. Consider a uniform proposal distribution $\cQ$ and two neighbor states $y_1$ and $y_2$ such that $\pi(y_1)/\pi(x) = 2$ and $\pi(y_2)/\pi(x)=3$. Using the \textit{min} balancing function both of these states have the same importance so the probability that the Markov chain chooses any of them in the next iteration is the same. However using any of the other two balancing functions the difference in probability is translated in one of the states being more likely to be chosen. In this example using \textit{max} balancing function makes $y_2$ 50\% more likely to be chosen than $y_1$ while using \textit{squared root} balancing function makes $y_2$ 22\% more likely to be chosen than $y_1$

\begin{figure}[!h]
    \centering
    \includegraphics[width=0.8\linewidth]{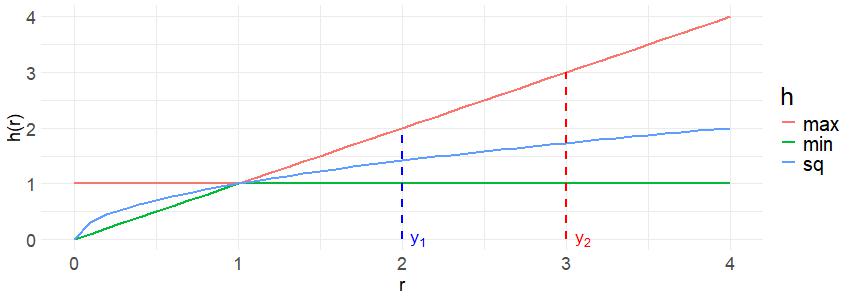}
    \caption{Comparison of balancing functions. Showing the possible values that the expression $h\left(\frac{\pi(y)}{\pi(x)}\right)$ may take for 2 neighbors of $x$: $y_1$ with lower probability and $y_2$ with higher probability.} 
    \label{fig:balancing_compare}
\end{figure}

\section{Adaptive IIT}

In general given a non-decreasing function $f(r)$ that we want to use to measure the importance of the states, we can create a version that is bounded and satisfies the balancing function property following what's presented by \citep{Zanella2020}.

For a fixed bounding constant $\gamma>1$ we first take the bounded function $f_{\gamma}(r)=\frac{1}{\gamma}\min\{\gamma,f(r)\}$ which takes values in $(0,1)$ and define:

\begin{equation}\label{eq:balancing_function_bounded}
 h_{\gamma}(r)=\min\left\{f_{\gamma}(r),r f_{\gamma}\left(\frac{1}{r}\right)\right\}   
\end{equation}

Then $h_{\gamma}$ is a balancing function taking values in $(0,1)$.

The choice of bounding constant impacts the measurement of importance. Using a small constant may result in not fully using the information of the neighbors, while using a big constant may cause numerical errors during the implementation.

For a given state space $\cS$, target probability distribution $\pi$ and non-decreasing function $f$, we define the minimal fully informative constant as $\gamma^{*} = \max\{f(\pi(y)/\pi(x));x\in \cS, y \in \cN(x)\}$ since this is the minimum constant that assigns a different weight to all states with different probabilities. Without previous knowledge of the problem, finding this constant would be unfeasible when the dimension of $\cS$ is big. To overcome this problem we propose an adaptive algorithm that updates the bounding constant as it explores the state space.

For the adaptive algorithm first we consider the basis function $f$ that will be used to measure the importance of the states. Following equation \ref{eq:balancing_function_bounded} and considering a constant $\gamma$ we can create a bounded balancing function that can be used to define the Markov chain kernel $P_{\gamma}$ as follows:

\begin{equation}\label{eq:markov_kernel_adaptive}
    \begin{split}
  P_{\gamma}(Y|X) =&\; \mathcal{Q}(Y|X)h_{\gamma}\left(\frac{\pi(Y)\mathcal{Q}(X|Y)}{\pi(X)\mathcal{Q}(Y|X)}\right) \qquad \text{for } Y\neq X, \\ 
  P_{\gamma}(X|X) =&\; 1- \sum_{y \in \mathcal{N}_{X}}  P_{\gamma}(y|X).
    \end{split}
\end{equation}

With an adequate choice of proposal distribution $\cQ$ this kernel is irreducible and aperiodic. For any fixed $\gamma>0$,  since $h_{\gamma}$ is a balancing function then the chain is reversible and $\pi(x)$ is the stationary distribution.

\begin{prop}
    Let $\pi$ be the target probability distribution, $\gamma\geq1$ be fixed and consider the Markov chain kernel $P_{\gamma}$ defined in equation \ref{eq:markov_kernel_adaptive} where $\cQ$ is a proposal distribution that makes the kernel irreducible and aperiodic and $h_{\gamma}$ is a balancing function taking values in $(0,1)$. Then $P_{\gamma}$ has $\pi$ as a stationary distribution and it's ergodic for $\pi$. 
\end{prop}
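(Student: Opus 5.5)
The plan is to verify detailed balance directly and then invoke the standard convergence theorem for finite-state Markov chains. First, I will check that $P_\gamma$ is a valid transition kernel. For $Y\neq X$ the entries are nonnegative, since $\cQ\geq 0$ and $h_\gamma>0$. The diagonal entry is nonnegative because $h_\gamma\le 1$, which gives
$$\sum_{y\in\cN_X}P_\gamma(y|X)\le\sum_{y\in\cN_X}\cQ(y|X)\le 1.$$
So $P_\gamma(X|X)\in[0,1]$ and the rows sum to one. This is exactly where the boundedness of $h_\gamma$ (equivalently, the truncation at $\gamma$ in $f_\gamma$) is needed: for an unbounded balancing function the diagonal term could be negative.

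Second, I will prove reversibility. Take $x\neq y$ with $\cQ(y|x)>0$ and set $r=\frac{\pi(y)\cQ(x|y)}{\pi(x)\cQ(y|x)}$. I will first confirm the balancing identity $h_\gamma(r)=rh_\gamma(1/r)$ from equation \ref{eq:balancing_function_bounded}. Indeed,
$$rh_\gamma(1/r)=r\min\{f_\gamma(1/r),\tfrac1r f_\gamma(r)\}=\min\{rf_\gamma(1/r),f_\gamma(r)\}=h_\gamma(r).$$
Then
$$\pi(x)P_\gamma(y|x)=\pi(x)\cQ(y|x)\,r\,h_\gamma(1/r)=\pi(y)\cQ(x|y)\,h_\gamma\!\left(\frac{\pi(x)\cQ(y|x)}{\pi(y)\cQ(x|y)}\right)=\pi(y)P_\gamma(x|y).$$
If $\cQ(y|x)=0$, then both sides vanish, provided $\cQ$ is symmetric in its support, which I will assume as is implicit in the definition of $r$. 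Summing detailed balance over $x$ gives $\sum_x\pi(x)P_\gamma(y|x)=\pi(y)$, so $\pi$ is stationary.

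Third, for ergodicity I will use the hypotheses that $P_\gamma$ is irreducible and aperiodic on the finite space $\cS$ (restricted to the support of $\pi$). The standard convergence theorem for finite irreducible aperiodic chains then gives uniqueness of the stationary distribution, which must therefore equal $\pi$. It also gives $\|P_\gamma^n(x,\cdot)-\pi\|_{TV}\to 0$ for every starting state $x$, and the ergodic theorem $\frac1N\sum_{i=1}^N g(X_i)\to\E_\pi g$ almost surely.

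I expect the only delicate point to be the validity of the kernel and the support conditions, not the balance computation itself. This means the nonnegativity of $P_\gamma(X|X)$, the requirement that $\pi>0$ on $\cS$, and symmetry of the support of $\cQ$ so that $r$ is well defined and $\cQ(y|x)=0$ forces the reverse transition to vanish. I would state these as standing assumptions in the proof and check them first.
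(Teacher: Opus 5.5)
Your proposal is correct and follows the paper's proof: detailed balance $\pi(x)P_\gamma(y|x)=\pi(y)P_\gamma(x|y)$ from the identity $h_\gamma(r)=rh_\gamma(1/r)$, then the convergence theorem for finite irreducible aperiodic chains. Your extra checks are points the paper leaves implicit, so they tighten the argument rather than change it. They are the nonnegativity of $P_\gamma(X|X)$ via $h_\gamma\le 1$, deriving the balancing identity from equation \ref{eq:balancing_function_bounded}, and the standing assumptions $\pi>0$ and symmetric support of $\cQ$.
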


\begin{proof}
    Since $h_{\gamma}$ is a balancing function then $h_{\gamma}(r)=rh_{\gamma}(1/r)$. Using this property we obtain

    \begin{align*}
        \pi(X)P_{\gamma}(Y|X) =&\; \pi(X)\mathcal{Q}(Y|X)h_{\gamma}\left(\frac{\pi(Y)\mathcal{Q}(X|Y)}{\pi(X)\mathcal{Q}(Y|X)}\right)\\
        =&\; \pi(X)\mathcal{Q}(Y|X)\frac{\pi(Y)\mathcal{Q}(X|Y)}{\pi(X)\mathcal{Q}(Y|X)}h_{\gamma}\left(\frac{\pi(X)\mathcal{Q}(Y|X)}{\pi(Y)\mathcal{Q}(X|Y)}\right)\\
        =&\; \pi(Y)\mathcal{Q}(X|Y)h_{\gamma}\left(\frac{\pi(X)\mathcal{Q}(Y|X)}{\pi(Y)\mathcal{Q}(X|Y)}\right)\\
        =&\;\pi(Y)P_{\gamma}(X|Y)
    \end{align*}

    So $P_{\gamma}$ has $\pi$ as a stationary distribution and since the kernel is also aperiodic and irreducible over a finite state space $\cS$ then $\P_{\gamma}$ it's ergodic for $\pi$  
\end{proof}

We can then define the rejection-free kernel equivalent to equation \ref{eq:markov_kernel_adaptive} as follows:

\begin{equation}\label{eq:markov_kernel_adaptive_rf}
    \begin{split}
  \hP_{\gamma}(Y|X) =&\; \frac{1}{Z_{h}(X)}\mathcal{Q}(Y|X)h_{\gamma}\left(\frac{\pi(Y)\mathcal{Q}(X|Y)}{\pi(X)\mathcal{Q}(Y|X)}\right) \qquad \text{for } Y\neq X, \\ 
  \hP_{\gamma}(X|X) =&0.
  \end{split}
\end{equation}

Where $Z_{h}(X)$ is the probability of escaping state $X$, defined as:
\begin{equation}
    Z_{h}(X) = \sum_{Y \in \mathcal{N}_{X}}\mathcal{Q}(Y|X)h_{\gamma}\left(\frac{\pi(Y)\mathcal{Q}(X|Y)}{\pi(X)\mathcal{Q}(Y|X)}\right) 
\end{equation}

\begin{prop}
    Let $\pi(\cdot)$ be the target probability distribution, $\gamma\geq1$ be fixed and consider the rejection-free Markov chain kernel $\hP_{\gamma}$ defined in equation \ref{eq:markov_kernel_adaptive_rf} where $\cQ$ a proposal distribution that makes the kernel irreducible and aperiodic and $h_{\gamma}$ is a bounded balancing function. Then $\hP_{\gamma}$ has $\pi^{*}(\cdot)=\pi(\cdot)Z_{h}(\cdot)$ as a stationary distribution and it's ergodic for $\pi^{*}$. 
\end{prop}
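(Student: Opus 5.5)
The plan is to prove stationarity of $\pi^{*}=\pi Z_{h}$ through detailed balance, reusing the reversibility already shown for $P_{\gamma}$ in the previous proposition, and then obtain ergodicity from irreducibility and aperiodicity on the finite space $\cS$.

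\textbf{Detailed balance.} The key observation is that for $Y\neq X$ the rejection-free kernel is the lazy kernel with the rejection mass removed:
\[
\hP_{\gamma}(Y|X)=\frac{P_{\gamma}(Y|X)}{Z_{h}(X)},
\]
where $Z_{h}(X)=1-P_{\gamma}(X|X)$. This identity is immediate from equations \ref{eq:markov_kernel_adaptive} and \ref{eq:markov_kernel_adaptive_rf}. For $Y\neq X$ it gives
\[
\pi^{*}(X)\hP_{\gamma}(Y|X)=\pi(X)Z_{h}(X)\frac{P_{\gamma}(Y|X)}{Z_{h}(X)}=\pi(X)P_{\gamma}(Y|X).
\]
By the previous proposition, the right-hand side equals $\pi(Y)P_{\gamma}(X|Y)=\pi^{*}(Y)\hP_{\gamma}(X|Y)$. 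The case $Y=X$ is trivial. Summing detailed balance over $X$ then shows that $\pi^{*}$ is invariant.

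\textbf{Normalization and well-definedness.} $\pi^{*}$ is in general only an unnormalized measure, so the stationary distribution is $\pi^{*}/\sum_{x}\pi(x)Z_{h}(x)$, which I would note explicitly. The sum is finite because $\cS$ is finite, and it is positive as long as $Z_{h}(x)>0$ for every $x$. Positivity of $Z_{h}$ follows because $h_{\gamma}>0$ and every state has at least one neighbour under $\cQ$; this in turn is forced by irreducibility, assuming $|\cS|>1$. Positivity is also exactly what is needed for $\hP_{\gamma}$ to be well defined, so I would state it first.

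\textbf{Ergodicity.} The supports of $\hP_{\gamma}(\cdot|x)$ and $P_{\gamma}(\cdot|x)$ coincide off the diagonal, so the jump chain is irreducible whenever $P_{\gamma}$ is.

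Aperiodicity is the main obstacle. The jump chain never stays put, and aperiodicity of $P_{\gamma}$ may come only from self-loops. For example, with single-bit-flip proposals on a binary space the neighbourhood graph is bipartite, so the jump chain has period $2$.

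I would handle this in one of two ways. The first is to rely on the statement's hypothesis that $\cQ$ makes the kernel $\hP_{\gamma}$ itself irreducible and aperiodic. With that assumption, a finite irreducible aperiodic chain with invariant distribution $\pi^{*}$ is ergodic by the standard convergence theorem, and the proof is complete.

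The second, as a remark, covers the periodic case. Irreducibility plus finiteness still gives the ergodic theorem for time averages with limit $\pi^{*}$ (normalized). This is exactly what the weighted estimator \ref{eq:weighted_estimator} with weights $w_k=1/Z_{h}(X_k)$ requires. Consistency for $\E_{\pi}(f(X))$ then follows because $\E_{\pi^{*}}[f/Z_{h}]/\E_{\pi^{*}}[1/Z_{h}]=\E_{\pi}[f]$.
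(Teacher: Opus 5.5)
Your proof is correct and follows essentially the paper's route: detailed balance for $\pi Z_{h}$ via the balancing property, then the standard convergence theorem for finite irreducible aperiodic chains. You route the detailed balance through the identity $\hP_{\gamma}=P_{\gamma}/Z_{h}$ off the diagonal and the previous proposition, which is the same computation the paper writes out directly; your extra remarks on normalizing $\pi^{*}$, on positivity of $Z_{h}$, and on the jump chain having period $2$ under single-bit-flip proposals are accurate and go beyond what the paper checks.
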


\begin{proof}
    The proof is very similar to the one before and relies on the property of a balancing function.

    \begin{align*}
        \pi(X)Z_{h}(X)\hP_{\gamma}(Y|X) =&\; \pi(X)Z_{h}(X)\frac{1}{Z_{h}(X)}\mathcal{Q}(Y|X)h_{\gamma}\left(\frac{\pi(Y)\mathcal{Q}(X|Y)}{\pi(X)\mathcal{Q}(Y|X)}\right)\\
        =&\; \pi(X)\mathcal{Q}(Y|X)\frac{\pi(Y)\mathcal{Q}(X|Y)}{\pi(X)\mathcal{Q}(Y|X)}h_{\gamma}\left(\frac{\pi(X)\mathcal{Q}(Y|X)}{\pi(Y)\mathcal{Q}(X|Y)}\right)\\
        =&\; \pi(Y)\mathcal{Q}(X|Y)h_{\gamma}\left(\frac{\pi(X)\mathcal{Q}(Y|X)}{\pi(Y)\mathcal{Q}(X|Y)}\right)\\
        =&\; \pi(Y)Z_{h}(Y)\frac{1}{Z_{h}(Y)}\mathcal{Q}(X|Y)h_{\gamma}\left(\frac{\pi(X)\mathcal{Q}(Y|X)}{\pi(Y)\mathcal{Q}(X|Y)}\right)\\
        =&\;\pi(Y)Z_{h}(Y)P_{\gamma}(X|Y)
    \end{align*}

    So $\hP_{\gamma}$ has $\pi^{*}$ as a stationary distribution and since the kernel is also aperiodic and irreducible over a finite state space $\cS$ then $\hP_{\gamma}$ it's ergodic for $\pi$.
\end{proof}

Following the notation presented by \cite{adaptive-mcmc-roberts-rosenthal}, we consider a chain with Markovian adaptations $\{(X_n,\gamma_n)\}$ where at time $n$ we use the Markov kernel defined by $\gamma_n$.

To define how the kernels will be adapted, we consider a function $M:\cS \to \R$ defined as:

\begin{equation}\label{eq:bound-m}
  M(X)=\max_{Y \in \mathcal{N}(X)}\left\{f\left(\frac{\pi(Y)\mathcal{Q}(X|Y)}{\pi(X)\mathcal{Q}(Y|X)}\right),f\left(\frac{\pi(X)\mathcal{Q}(Y|X)}{\pi(Y)\mathcal{Q}(X|Y)}\right)\right\}  
\end{equation}

This function checks the possible values that the ratio of probabilities can take when evaluated using function $f$.

With this function we consider the set $\cY =\{M(X)|X\in \cS\}$ and we consider all the Markov chain kernels $P_{\gamma}$ for each $\gamma \in \cY$. 

Using these kernels the transition probabilities of the adaptive algorithm are given by 

\begin{equation}\label{eq:transition_adaptive}
    P(X_{n+1}=y|X_n=x,\Gamma_n=\gamma)=P_{\gamma}(y|x)
\end{equation}

We initialize the bounding constant at $\gamma_{0}=1$ and every time we visit a new state $X_n \in \cS$ we update the constant considering the information of the neighbor states $\gamma_{n}=\max\{M(X_n),\gamma_{n-1}\}$ and use the corresponding adapted Markov kernel to choose the next state. Algorithm \ref{alg:IIT-adaptive-rf} shows the steps to implement this adaptive Markov chain.

 \begin{algorithm}
\caption{Adaptive IIT}
\label{alg:IIT-adaptive-rf}
\begin{algorithmic}
\State Initialize $X_0$ and $\gamma_0=1$
\For{$k$ in $0$ to $K$}
\State \Comment{{\color{blue}Step 1: Update bounding constant}}
    \State For each $Y \in \mathcal{N}_{X_{k}}$, calculate $$f\left(\frac{\pi(Y)\mathcal{Q}(X_{k}|Y)}{\pi(X_{k})\mathcal{Q}(Y|X_{k})}\right),f\left(\frac{\pi(X_k)\mathcal{Q}(Y|X_K)}{\pi(Y)\mathcal{Q}(X_k|Y)}\right)$$ 
    \State Compute $M(X_{k})$ as shown in equation \ref{eq:bound-m}
    \State Update $\gamma_{k+1}=\max\{\gamma_{k}, M(X_{k})\}$
    \State Define $h_{\gamma}$ using the updated constant $\gamma_{k+1}$ following equation \ref{eq:balancing_function_bounded}
    \State \Comment{{\color{blue}Step 2: Use the adapted Markov kernel to update the chain}}
    \State Calculate $$Z_h(X_{k})=\sum_{y\in \mathcal{N}_{X_{k}}}\mathcal{Q}(Y|X_{k})h_{\gamma}\left(\frac{\pi(Y)\mathcal{Q}(X_{k}|Y)}{\pi(X_{k})\mathcal{Q}(Y|X_{k})}\right)$$
    
    \State Set $w_k(X_{k}) \leftarrow 1 + \text{Geometric}(Z_{h}(X_{k}))$.  
    \State Choose the next state $Y\in \mathcal{N}_{X_{k}}$ such that
    $$P(X_{k+1}=Y|X_{k})  = \frac{1}{Z_h(X_n)}\mathcal{Q}(Y|X_{k})h_{\gamma}\left(\frac{\pi(Y)\mathcal{Q}(X_{k}|Y)}{\pi(X_{k})\mathcal{Q}(Y|X_{k})}\right)$$ 
\EndFor
\end{algorithmic}
\end{algorithm}

Although the algorithm considers adaptations can happen every iteration, since we are considering a finite state space $\cS$, then the set of bounding constants $\cY$ is finite so this becomes a finite adaptation algorithm. Finite adaptations are one of the safe ways to create adaptive Markov chains that are ergodic \citep{adaptive-mcmc-roberts-rosenthal}.

\begin{prop}\label{prop:adaptive_ergodic}
For a function $f:(0,\infty)\to(0,\infty)$ consider the chain with Markovian adaptations $\{X_n, \gamma_n\}$ defined over a finite state space $\cS$ that follows algorithm \ref{alg:IIT-adaptive-rf} where $\cY$ is defined above and for each $\gamma \in \cY$ let $P_{\gamma}$ be the Markov chain kernel as defined in equation \ref{eq:markov_kernel_adaptive} that uses the balancing function $h_{\gamma}$ defined in equation \ref{eq:balancing_function_bounded}. Assume that for each fixed $\gamma \in \cY$ the kernel $P_{\gamma}$ is ergodic $\lim_{n \to \infty}||P_{\gamma}^{n}(x,\cdot)-\pi(\cdot)||=0$ for all $\gamma \in \cY$ and $x \in \cS$, then the adaptive algorithm is ergodic. 
\end{prop}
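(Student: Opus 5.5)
We plan to use the finite-adaptation principle of \cite{adaptive-mcmc-roberts-rosenthal}: the sequence $\{\gamma_n\}$ is non-decreasing, each value lies in the finite set $\cY\cup\{1\}$, and therefore it can change only finitely many times. The proof has three steps: show that adaptation stops almost surely, condition on the stopping time, and then apply ergodicity of the fixed kernels. We want to show that $\|\P(X_n\in\cdot\mid X_0=x)-\pi(\cdot)\|\to 0$.

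\textbf{Adaptation stops.} By construction $\gamma_{n+1}=\max\{\gamma_n,M(X_n)\}$ with $\gamma_0=1$. Hence $\gamma_n\in\{1\}\cup\cY$, and $\{\gamma_n\}$ is a monotone sequence taking values in a finite set. So every sample path stabilizes: there is an a.s.\ finite random time
\[
T=\inf\{n:\gamma_m=\gamma_n\ \text{for all } m\ge n\}.
\]
In fact we expect the terminal value to be deterministic. Every $P_\gamma$ with $\gamma\in\cY$ is irreducible on the finite set $\cS$. Between changes the chain runs a fixed irreducible kernel, so it visits every state a.s. Therefore $\gamma_n\to\gamma^\dagger:=\max(\{1\}\cup\cY)$ a.s. This also gives diminishing adaptation trivially, since $\sup_x\|P_{\gamma_{n+1}}(x,\cdot)-P_{\gamma_n}(x,\cdot)\|=0$ for all $n\ge T$.

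\textbf{Coupling with a fixed kernel.} Fix $\epsilon>0$. Choose $N$ such that $\P(T>N)<\epsilon$. Once $\gamma_n=\gamma^\dagger$, it stays there, because no larger value exists. So on the event $\{T\le N\}$ the chain evolves after time $N$ according to the single kernel $P_{\gamma^\dagger}$. For $n>N$ we therefore write
\[
\P(X_n\in A)=\E\big[P_{\gamma^\dagger}^{\,n-N}(X_N,A)\mathbf 1\{T\le N\}\big]+\P(X_n\in A,\,T>N).
\]
The second term is at most $\epsilon$. For the first term we use the hypothesis $\lim_n\|P_{\gamma^\dagger}^{n}(x,\cdot)-\pi\|=0$ for each $x$. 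Because $\cS$ is finite, this convergence is uniform in $x$, so we can pick $m$ with $\sup_x\|P^{m}_{\gamma^\dagger}(x,\cdot)-\pi\|<\epsilon$. Combining the two bounds gives $\|\P(X_n\in\cdot)-\pi\|\le 3\epsilon$ for all $n\ge N+m$.

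\textbf{Main obstacle.} The delicate point is the conditioning. The event $\{T\le N\}$ depends on the future of the path, so the chain after time $N$ is not a plain $P_{\gamma^\dagger}$-chain given that event. We avoid this by working with the stopping time $\{\gamma_N=\gamma^\dagger\}$ instead of $T$. This event is $\mathcal F_N$-measurable, and on it the future is exactly a $P_{\gamma^\dagger}$-chain started at $X_N$, by the strong Markov property of $(X_n,\gamma_n)$.

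If one prefers not to identify $\gamma^\dagger$, there is an alternative. We condition on $\{\gamma_N=\gamma\}$ for each $\gamma$ and bound the probability that $\gamma$ changes again after $N$. Alternatively, we invoke Theorem~5 of \cite{adaptive-mcmc-roberts-rosenthal}, which requires simultaneous uniform ergodicity and diminishing adaptation. Simultaneous uniform ergodicity follows here because there are finitely many ergodic kernels on a finite space, so we can take the maximum of the finitely many mixing times.
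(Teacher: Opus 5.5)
Your proposal is correct and takes the same finite-adaptation route as the paper: $\gamma_n$ is monotone in a finite set, and it freezes at its maximal value once the chain hits a state realizing $\max \cY$, which happens almost surely by irreducibility on the finite space $\cS$. The only difference is in the last step: the paper cites the finite-adaptation result of \citet{adaptive-mcmc-roberts-rosenthal}, whereas you prove it directly by applying the Markov property at the deterministic time $N$ on the $\mathcal{F}_N$-measurable event $\{\gamma_N=\gamma^\dagger\}$, which correctly avoids the conditioning issue you identified with $\{T\le N\}$.
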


\begin{proof}
Let $\cM:= \max_{\gamma \in \cY}\{\gamma\}$, there exist $X' \in \cS$ such that $M(X')=\cM$. 

Using the adaptive chain we define the stopping time $\tau = \inf\{n:\N|X_n=X'\}$ which identifies the time of first visit to this state, which also corresponds to the time when the Markov chain kernel will stop changing. $\Gamma_n=\Gamma_{\tau}$ for all $n\geq \tau$.

Since the chain is irreducible and $\cS$ is finite, the states are positive recurrent then $P(\tau < \infty|X_0=x)=1$ for any $x \in \cS$. So this a finite adaptation algorithm, using Proposition 2 from \citet{adaptive-mcmc-roberts-rosenthal} we know the adaptive algorithm converges to target distribution $\pi$.
\end{proof}

Algorithm \ref{alg:IIT-adaptive-rf} considers that at every step we update the bounding constant so the proof of convergence to the target distribution relies on the fact that the state space $\cS$ is finite. If we want to extend this to an infinite discrete state space $\cS^{*}$ we will need to modify the adaptive step of the algorithm. We can consider a reduction of the bounding constant towards 1, this reduction can be random, defining each step a probability to reduce the constant, or deterministic by reducing the constant after specific number of steps. We can also consider that step 1 is run only for a deterministic number of steps $K$ and after that we use the balancing function $h_{\gamma}$ defined with the biggest found bounding constant.

Proposition \ref{prop:adaptive_ergodic} proves that the Markov chain that follows algorithm \ref{alg:IIT-adaptive-rf} is ergodic for a distribution related to the target distribution $\pi$. Using theorem 8 from \citet{rosenthal2020jump} we can see that if $\{X_k\}$ is the Markov chain constructed using A-IIT and $\{M_k\}$ is the corresponding multiplicity list then, for a suitable function $g$ the estimator:

\begin{equation}
    \frac{\sum_{k=1}^{N}M_kg(X_k)}{\sum_{k=1}^{N}M_k}
\end{equation}

is a consistent estimator of $\mathbb{E}_{\pi}(g(X))$

Adaptive IIT explores the state space $\cS$ in the same way that IIT would do it but it changes the way the weights are estimated. A-IIT can use a multiplicity list because the "escape probability" $Z_h(X)$ is bounded between $(0,1)$ which was not the case in general for IIT. Using the multiplicity list will be useful when implementing the algorithm with Parallel Tempering as it allows to define a specific number of samples to obtain from the original (non rejection-free) Markov chain.

\section{Rejection-free algorithms with Parallel Tempering}

When implementing MCMC algorithms with multi modal target distributions it's often useful to use Metropolis-Coupled chains or  Parallel Tempering \citep{geyer1991markov_parallel_tempering}.

To implement Parallel Tempering we consider $R$ independent copies (or replicas) of the chain $X$, each on the state space $\cS$ and a set of inverse temperatures $\beta_1 > \beta_2 > ... >\beta_{R} \geq0$. Replica $i$ has $\pi_i \propto \pi^{\beta_i}$ as its stationary distribution. 

Parallel tempering consist of two steps: First each replica explores the state space for a fixed number of iterations.  The second step is to propose a swap between pairs of adjacent replicas $X^{(\beta_i)}\leftrightarrow X^{(\beta_{i+1})} $. This swap is accepted using the usual Metropolis probability

\begin{equation}\label{eq:prob-swap}
    \min\left\{1,\frac{\pi_{2}(X^{(\beta_1)})\pi_{1}(X^{(\beta_2)})}{\pi_{1}(X^{(\beta_1)})\pi_{2}(X^{(\beta_2)})}\right\}
\end{equation}

This replica swap probability preserves the product target measure of the original chain $\prod_{i}\pi(\cdot)^{(\beta_i)}$.

If we would like to implement parallel tempering with each replica following a rejection-free algorithm then the swap probability needs to be adjusted \citep{rosenthal2020jump} considering that the rejection-free chain have a different target distribution. The swap probability needs to be modified as follows:

\begin{equation}\label{eq:prob-swap-adaptive}
    \min\left\{1,\frac{{\color{red}Z_2(X^{(\beta_1)})}\pi_{2}(X^{(\beta_1)}){\color{red}Z_1(X^{(\beta_2)})}\pi_{1}(X^{(\beta_2)})}{{\color{red}Z_1(X^{(\beta_1)})}\pi_{1}(X^{(\beta_1)}){\color{red}Z_2(X^{(\beta_2)})}\pi_{2}(X^{(\beta_2)})}\right\}
\end{equation}

This preserves the modified product measure $\prod_{i}Z(\cdot)\pi(\cdot)^{(\beta_!)}$ which corresponds to the target distribution where each replica is a rejection-free chain. Not adjusting the probability will induce bias in the chain that will affect the rate of convergence.

As stated before each of these $Z$ factors requires to evaluate $\pi$ for every neighbor of the state and apply the corresponding balancing function. For high dimension spaces $\cN(X)$ can become very large and although rejection-free chains always move to a different state, the replica swap may be rejected and the additional computations won't translate in new information for the Markov chain.

To avoid these additional computations we use a rejection-free algorithm with multiplicity list so we can alternate between the independent replica updates and the replica swaps without inducing bias. As explained by \citet{rosenthal2020jump}, we define a number of samples $L_0$ to obtain from the original chain (with target distribution $\pi$). Then we proceed as follows.

\begin{enumerate}
\setlength\itemsep{0.01em}
    \item Set the number of remaining repetitions to $L=L_0$
    \item Find the next state for replica $i$, $X_{n+1}^{(i)}$ and the value of the multiplicity list $M_{n}^{(i)}$ corresponding to the current state $X_{n}^{(i)}$
    \item If $M_{n}^{(i)}\geq L$ then replace $M_{n}^{(i)}$ by $L$ and the chain stays in the current state $X_{n+1}^{(i)}=X_{n}^{(i)}$. Then return to step 1  for the next replica.
    \item If $M_{n}^{(i)}< L$ then update $L=L-M_{n}^{(i)}$ and return to step 2 for the same replica
\end{enumerate}

Using a rejection-free chain with a multiplicity list allows us to transform each single rejection-free iteration into $M_n$ samples from the original chain. So we can use the steps described above to obtain $L_0$ samples from the original chain using a multiplicity list in all the replicas. This then allows us to use parallel tempering without creating bias and without additional computations when trying a replica swap. We can see it's straightforward to implement Parallel Tempering where each replica uses A-IIT following the four steps above. This is more efficient than using IIT in the replicas as we avoid the computation of the factors ${\color{red}Z_i(X^{(\beta_j)})}$ every time we try a replica swap.

 \section{Implementation of Adaptive IIT in Parallel Tempering}

\subsection{Single step IIT}
Algorithm \ref{alg:IIT-adaptive-rf} requires a lot of computations for each iteration. First the update of the bounding constant requires the computation of $\pi(\cdot)$ for each of the $\cN(X)$ neighbors of the current state $X$. Then we need to apply the bounded balancing function to each of the probability ratios to compute the weight $Z_h(X_k)$ and then choose one of the proposed states.


When implementing A-IIT with parallel tempering it's possible to use a non-rejection free algorithm for some of the replicas as long as the target distribution is not modified. To achieve this, we can consider an algorithm that uses a bounded balancing function $h$ and is equivalent to Adaptive IIT but creates just one sample of the original chain every iteration.

 \begin{algorithm}
\caption{Single-step IIT (SS-IIT)}
\label{alg:IIT-single-step}
\begin{algorithmic}
\State Initialize $X_0$ and $\gamma_0=1$
\For{$k$ in $0$ to $K$}
    \State \Comment{{\color{blue}Step 1: Update bounding constant}}
    \State Propose a new state $Y \in \cN(X_k)$ using the proposal distribution $\mathcal{Q}(\cdot|X_{k})$
    \State Calculate $$B_k=\max\left\{f\left(\frac{\pi(Y)\mathcal{Q}(X_{k}|Y)}{\pi(X_{k})\mathcal{Q}(Y|X_{k})}\right),f\left(\frac{\pi(X_k)\mathcal{Q}(Y|X_k)}{\pi(Y)\mathcal{Q}(X_k|Y)}\right)\right\}$$
    \State Update $\gamma_{k+1}=\max\{\gamma_k,B_k\}$
    \State Define $h_{\gamma}$ using the updated constant $\gamma_{k+1}$ following equation $\ref{eq:balancing_function_bounded}$ 
    \State  \Comment{{\color{blue}Step 2: Use the adapted Markov kernel to update the chain}}
    \State Set $w_k(X_{k}) \leftarrow 1$.
    \State let $U\sim Unif(0,1)$
    \If{$U<h_{\gamma}\left(\frac{\pi(Y)\mathcal{Q}(X_{k}|Y)}{\pi(X_{k})\mathcal{Q}(Y|X_{k})}\right)$}
    \State $X_{k+1} \gets Y$
    \Else 
    \State $X_{k+1} \gets X_k$
    \EndIf
\EndFor
\end{algorithmic}
\end{algorithm}


As discussed in the previous section, when implementing parallel tempering with Adaptive IIT we define a number of samples of the original chain $L_0$ to obtain before trying a replica swap. We can also use Single-step IIT to obtain the desired number of samples.

The advantage of A-IIT is that, since it's rejection free, after every iteration the chain moves to a different state however this requires additional computations. The advantage of SS-IIT is that it doesn't require as many computations to propose a jump however we may reject the proposed state and the chain may not explore the state space as fast as A-IIT.

When implementing Parallel Tempering we can choose some replicas to use A-IIT and others to use SS-IIT. For some replicas with high values of $\beta$ we expect more rejections whenever the chain moves to a state of high probability so we would prefer A-IIT. For a value of $\beta$ close to zero we don't expect as many rejections so SS-IIT might be a better choice. 

\subsection{Choosing between A-IIT and SS-IIT}

Now we have two adaptive algorithms that can generate samples from the target distribution $\pi$ while using any balancing function $h$. To choose which of them to use for the different replicas in parallel tempering we will compare the number of computations needed for the chain to generate $L_0$ samples from the original chain. In this situation we consider that each evaluation of $\pi$ in a neighbor state counts as one computation.  SS-IIT always generates 1 sample of the original chain for each iteration of the algorithm which requires to evaluate $\pi$ in the proposed neighbor state. A-IIT generates one sample from the rejection-free chain which is equivalent to $M_n$ (a random number) of samples from the original chain but it requires to evaluate $\pi$ for each state in the set $\cN_X$.

Consider the state space $\cS=\{0,1\}^p$ with $p=5000$ and a target distribution with 6 modes defined in a similar way than the target distributions presented in the next section i.e., $\pi(X)\propto \sum_{i=1}^6\exp(-\theta \;||X-X_{(i)}||_{1})$.

We run parallel tempering with 13 replicas $(X^{(\beta_1)},\dots,X^{(\beta_{13})})$. Each replica uses Adaptive IIT until the multiplicity list reaches a value of $L_0=800$ then tries a replica swap. We run 25 simulations and in each simulation each replica performs 100 swaps before stopping. We then compute the average number of rejection free steps each replica performed in-between replica swaps.

Table \ref{tab:temperatures_iteration_compare} contains the values of temperatures used and the average swap rate between $\beta_i$ and $\beta_{i+1}$.

\begin{table}[!htbp] \centering 
  \caption{Inverse temperature used} 
  \label{tab:temperatures_iteration_compare} 
\begin{tabular}{@{\extracolsep{5pt}} ccc} 
\\[-1.8ex]\hline 
\hline \\[-1.8ex] 
\multirow{2}{*}{id} & inverse & \multirow{2}{*}{swap rate}\\ 
 & temperature & \\ 
\hline \\[-1.8ex] 
$\beta_{1}$  & 20000 & 0.412\\ 
$\beta_{2}$  & 19820 & 0.382\\ 
$\beta_{3}$  & 19539 & 0.426\\ 
$\beta_{4}$  & 19409 & 0.401\\ 
$\beta_{5}$  & 19282 & 0.384\\ 
$\beta_{6}$  & 19169 & 0.414\\ 
$\beta_{7}$  & 19112 & 0.333\\ 
$\beta_{8}$  & 18935 & 0.272\\ 
$\beta_{9}$  & 18724 & 0.341\\ 
$\beta_{10}$ & 18663 & 0.318\\ 
$\beta_{11}$ & 18586 & 0.256\\ 
$\beta_{12}$ & 18507 & 0.297\\ 
$\beta_{13}$ & 18485 & - \\ 
\hline \\[-1.8ex] 
\end{tabular} 
\end{table}

In figure \ref{fig:compare-iterations} we plot the number of rejection free iterations each replica with inverse temperature $\beta_i$ needs so that their multiplicity list reaches the value $L_0=800$, i.e., before we try a replica swap. 

We can observe that as the value of $\beta$ decreases, the number of iterations increases as well. 

\begin{figure}[!h]
    \centering
    \includegraphics[width=0.8\linewidth]{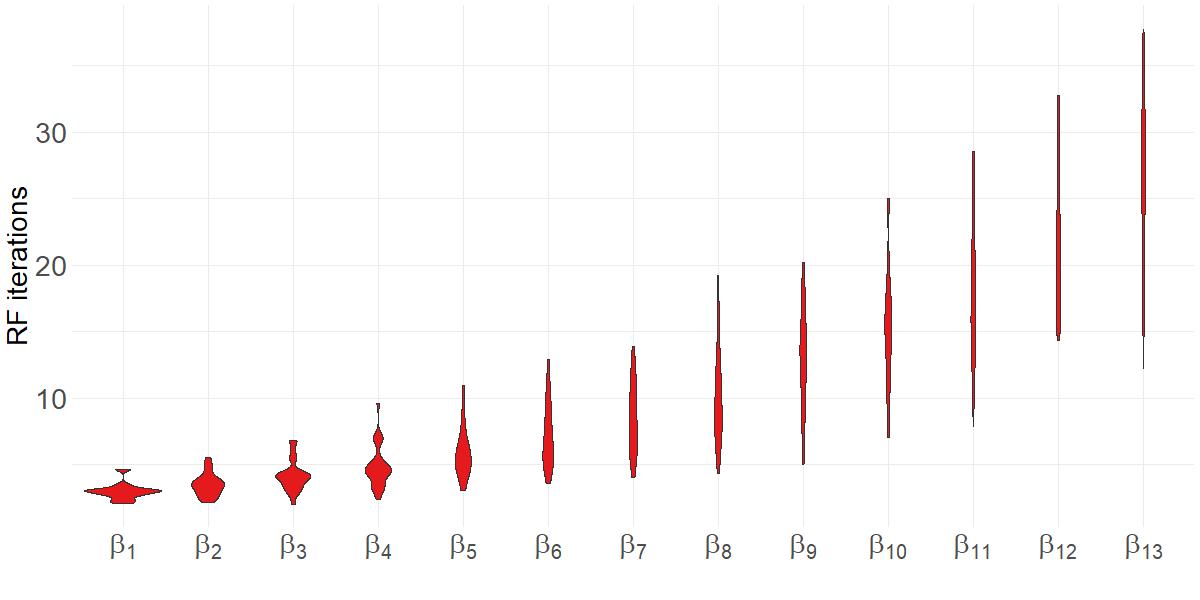}
    \caption{Number of Rejection Free iterations for replicas tempered at different temperatures} 
    \label{fig:compare-iterations}
\end{figure}

For $\beta_1$ on average the chain performs 3 rejection-free steps before the replica swap. This means that single-step IIT algorithm would need, on average, 267 rejections before accepting a move. On the other hand for $\beta_{13}$ the chain performs, on average, 26 rejection-free steps before the replica swap, single-step IIT algorithm would need, on average, 30 rejections before accepting a move.

The size of the neighbor set is the dimension of the state space $|\cN(X)|=5000$, this means that each rejection-free iteration needs to evaluate $\pi$ $5000$ times before moving to a new state no matter the value of $\beta$, meanwhile the number of computations needed for SS-IIT to move to a new state depends on the value of $\beta$, table \ref{tab:compare-iterations} shows a summary of the \% of computations needed for the chain to move to a different state for SS-IIT in comparison with A-IIT.

 \begin{table}[h]
 
\centering
\begin{tabular}{lcc}
\hline
Algorithm & $\beta_1$ & $\beta_{13}$ \\
\hline
A-IIT & 5000&5000 \\
SS-IIT & 267 &30 \\

\hline\hline
\% &5.3\% &0.6\% \\
\hline
\end{tabular}
\caption{Comparison of number of computations needed for the Markov chain to move to a different state}
\label{tab:compare-iterations}
\end{table}

In this example the the number of computations needed for the smallest $\beta$ is 10 times bigger than the computations for the largest $\beta$. When implementing Adaptive IIT with parallel tempering we can consider running a first simulation to measure the average number of rejection-free steps that algorithm \ref{alg:IIT-adaptive-rf} would need in-between replica swaps and compare with the number of computations needed. With this information we can define which algorithm to use for each replica in such a way that we reduce the number of computations needed for a replica to visit a new state and, in this way, implement an algorithm that efficiently explores the state space.

\subsection{General discussion of the balancing function}
The proposed adaptive algorithms are defined in general and can be applied for any increasing function $f$. In their work \cite{zhou2022} gives particular emphasis to the balancing function $h(r)=\sqrt{r}$ so in this work we implement a bounded version of this function to use in the adaptive algorithm. In this section we present the process to create a bounded version of this balancing function.

For a fixed $\gamma>0$ taking the function $f(r)=\sqrt{r}$ as a starting point, the bounded balancing function version of $f$ takes the following form:

\begin{equation}\label{eq:bounded_sq_balancing_func}
  h_{\gamma}(r)=\min\left\{1,r,\frac{\sqrt{r}}{\gamma}\right\}=\begin{cases}
    r \text { for } r\in\left[0,\frac{1}{\gamma^2}\right)\\
    \frac{\sqrt{r}}{\gamma} \text { for } r\in\left[\frac{1}{\gamma^2},\gamma^2\right)\\
    1 \text { for } r\in\left[\gamma^2,\infty\right)\\
\end{cases}  
\end{equation}

Figure \ref{fig:bounded-sqrt} shows the behaviour of this bounded balancing function for some values of $\gamma$.
\begin{figure}[!h]
    \centering
    \includegraphics[width=0.8\linewidth]{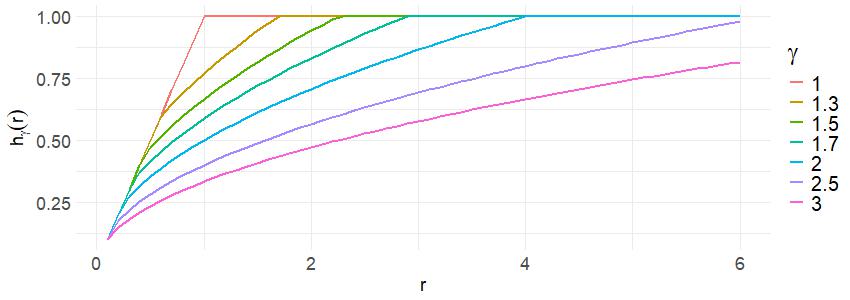}
    \caption{Bounded balancing function based on $h(r)=\sqrt{r}$ considering different values for the bounding constant $\gamma$.} 
    \label{fig:bounded-sqrt}
\end{figure}

Since we want to use $h_{\gamma}$ to compute probabilities there's no usage of using a bound smaller than 1. We can observe that for $\gamma=1$ $h_{1}(r)=\min\{1,r\}$ the balancing function is the same as the function used in Metropolis-Hastings algorithm.

The definition of algorithm \ref{alg:IIT-adaptive-rf} requires a double visit to the neighbor states. First we compute the ratio of probabilities of the neighbor states to update the bounding constant. Then, we apply the updated balancing balancing function to the target distribution evaluated at each neighbor state. This ensures that we're always using the latest information available (the latest bounding constant found). 

To avoid this double visit to the neighbor states, we could swap steps 1 and 2 in the algorithm. So first we compute the ratio of probabilities $\pi(y)/\pi(x)$ for all the neighbor states, store these values, then apply the bounded balancing function (using a previously found bound) to compute the factor $Z_h(x)$. After this computation we can use the stored values of the ratio of probabilities, apply the function $f$ and compute the new balancing function.

Changing the order of steps 1 and 2 in the algorithm helps to avoid the double visit to the neighbor states but we need extra storage capacity. If we use the squared root balancing function we can avoid this double visit in a more straightforward way.

Note that the definition of the bounded squared root balancing function (\ref{eq:balancing_function_bounded}), for a fixed $\gamma>0$ $h_{\gamma}(r)=\sqrt{r}/\gamma$ as long as $r \in [1/\gamma^2, \gamma^2]$. So we can simplify the definition of the function as long as the input is between these two values.

For simplicity consider that we're using a uniform proposal distribution $\cQ$ and consider $R_i=\pi(Y_i)/\pi(X)$. Note that the way that we update the bounding constant guarantees that $\gamma \geq R_i^{1/2}$ and $\gamma \geq R_i^{-1/2}$, this implies that $R_i \in [1/\gamma^2, \gamma^2]$ for all the neighbors of $X$, and this is ensured in every iteration of the algorithm.

Then we don't need to define a bounded balancing function to apply to each $\sqrt{R_i}$, to apply the bounded balancing function it suffices to divide all the values by $\gamma$ i.e., $h_{\gamma}(R)=\sqrt{R}/\gamma$. Algorithm \ref{alg:IIT-adaptive-efficient} shows how we can implement this more efficient version of Adaptive IIT when we use the balancing function $h(r)=\sqrt(r)$. This doesn't change the way the chain behaves but reduces the number of computations in each step of the algorithm.

 \begin{algorithm}
\caption{Adaptive IIT (computationally efficient for $f(r)=\sqrt{r}$)}
\label{alg:IIT-adaptive-efficient}
\begin{algorithmic}
\State Initialize $X_0$ and $\gamma_0=1$.
\State Consider a uniform proposal distribution $\cQ(Y|X)\propto 1$
\For{$k$ in $1$ to $K$}
    \State For each $Y_i \in \mathcal{N}_{X_{k-1}}$, calculate $$R_i(X_{k-1})^{1/2}= \left(\frac{\pi(Y_i)}{\pi(X_{k-1})}\right)^{1/2}$$ 
    \State $\gamma_{c}=\max_{i}\{R_i^{1/2},R_i^{-1/2} \}$
    \State $\gamma_{k}=\max\{\gamma_{k-1}, \gamma_{c}\}$
    \State Calculate $$Z_h(X_{k-1})=\frac{1}{|\cN(X_{k-1})|}\sum_{y\in \mathcal{N}_{X_{k-1}}}\frac{1}{\gamma_{k}}\left(\frac{\pi(Y_i)}{\pi(X_{k-1})}\right)^{1/2}$$
    
    \State Set $w_k(X_{k-1}) \leftarrow 1 + Geometric(Z_{h}(X_{k-1}))$.  
    \State Choose the next state $Y\in \mathcal{N}_{X_{k-1}}$ such that
    $$P(X_k=Y|X_{k-1}) \propto\left(\frac{\pi(Y) }{\pi(X_{k-1}) }\right)^{1/2}$$ 
\EndFor
\end{algorithmic}
\end{algorithm}

\section{Simulations}

\subsection{General framework}

For the simulations we consider a state space $\cS=\{0,1\}^p$ and a multimodal target distribution $\pi$ defined as

\begin{equation}\label{eq:target-pi}    
\pi(X)=\frac{1}{\mathfrak{C}_{\theta}}\sum_{i=1}^m\exp(-\theta \;||X-X_{(i)}||_{1})
\end{equation}

Where $m$ is the number of modes, $X_{(i)}$ represents mode $i$, $\theta>0$ and $||\cdot||_{i}$ is the $L_1$ distance, $\mathfrak{C}_{\theta}$ is the normalizing constant.

We implement non-reversible Parallel Tempering with $R$ replicas \citep{Syed2021NonReversible} alternatinv between swapping even then odd indexed replicas. We compare the performance of Adaptive IIT with other 3 algorithms. All of them use non-reversible parallel tempering but the replicas in each of the implementation will use one of 4 different algorithms. For all of the algorithms we use a uniform proposal distribution ($\cQ(y|x) \propto 1$ for all $y\in \cN_X$) and the neighbor set is defined by all the states with $L_1$ distance 1: $\cN_x=\{y\in\cS : ||x-y||_1=1\}$.




\begin{table}[h]
\centering
\setlength{\lightrulewidth}{0.01em}  
\begin{tabular}{clcccl}
\toprule
\# & ID & Name & $h(r)$ & $w(x)$ & Algorithm \\ 
\midrule
\multirow{2}{*}{1} & \multirow{2}{*}{A-IIT} & \multirow{2}{*}{Adaptive IIT} & \multirow{2}{*}{$\sqrt{r}$} & $1+G(Z(x))$ & \ref{alg:IIT-adaptive-rf} \\ 
\cmidrule(lr){5-6}
 &  &  &  & $1$ & \ref{alg:IIT-single-step} \\ 
\midrule
2 & IIT & Naive IIT & $\sqrt{r}$ & $1/Z(x)$ & \ref{alg:IIT-weight} \\ 
\midrule
\multirow{2}{*}{3} & \multirow{2}{*}{MH-mult} & \multirow{2}{*}{\begin{tabular}[c]{@{}l@{}}Metropolis Hastings\\ with multiplicity list\end{tabular}} & \multirow{2}{*}{$\min\{1,r\}$} & $1+G(Z(x))$ & \ref{alg:rejection-free-metropolis-hastings-mult} \\ 
\cmidrule(lr){5-6}
 &  &  &  & $1$ & \ref{alg:metropolis-hastings} \\ 
\midrule
4 & RF-MH & \begin{tabular}[c]{@{}l@{}}Rejection Free\\ Metropolis Hastings\end{tabular} & $\min\{1,r\}$ & $1/Z(x)$ & \ref{alg:rejection-free-metropolis-hastings-weight} \\ 
\bottomrule
\end{tabular}
\caption{Summary of compared algorithms.}
\label{tab:algorithms}
\end{table}

In table \ref{tab:algorithms} is the summary of the algorithms compared. IIT and RF-MH use direct weight estimation so they require additional computations for the replica swap probability. The algorithms that we label as A-IIT and MH-mult are implementation of 2 different algorithms. For some of the replicas with high values of $\beta$ we use a rejection free algorithm with a multiplicity list (algorithms \ref{alg:IIT-adaptive-rf} and \ref{alg:rejection-free-metropolis-hastings-mult}) while for replicas with low values of $\beta$ we use algorithms that allow rejections (algorithms \ref{alg:IIT-single-step} and \ref{alg:metropolis-hastings}), the number of replicas using a rejection-free algorithm is the same for both algorithms in any example.

Traditionally one should run the algorithms for the same amount of iterations to make a fair comparison of the performance of each of them. For this work is not straightforward to define the number of iterations to run each algorithm since algorithms (2) and (4) perform a fixed number of rejection free iterations while algorithms (1) and (3) perform a random number of rejection free iterations since they use a multiplicity list.

To define equivalent number of iterations in between replica swaps for all algorithms, we first define a number of rejection free iterations for algorithms (2) and (4) and then, for algorithms (1) and (3), we define a value of $L_0$ (the number of samples from the original chain to obtain before a replica swap) in such a way that, for the replica with the highest $\beta$, the average number of Rejection-Free steps in between replica swaps is as close as possible for all the algorithms. In this way the replicas get the same number of steps to explore the state space and the comparison is more fair.

For example for algorithms (2) and (4) we define 2 rejection-free iteration to try between replica swaps and to run the algorithms for a total of 2 million iterations, this translates in 1 million of swaps tried.
For algorithms (1) and (3) defining $L_0=800$ equals to 1.64 and 1.2 average rejection free steps. So we use that $L_0$ and the algorithms tries a total of 1 million replica swaps. In this way we try to make the  comparison as fair as possible.

The number of replicas and the distance between the $\beta$'s is defined such that the proportion of accepted swap rates between adjacent replicas is close to 0.234 \citep{atchade_2011_optimal_scaling_pt}.

We measure the time it takes the replica with the highest $\beta$ to reach all of the modes as we consider this is a measurement of how good is the chain's exploration capabilities to find states with high probability. This is presented with graphs simulating a ladder where in the x-axis we plot the time and the y-axis we plot the percentage of simulations that found all the modes of the target distribution at the defined time. Each jump in the plot represents one simulation finding all the modes of the target distribution, the faster the function reaches 100\% the faster the algorithm is in finding all the modes.

For the low dimensional examples we also measure the evolution of Total Variation Distance (TVD). We present this in a chart where the x-axis is the time and the y-axis is the average TVD calculated of all the simulations ran. The faster the plot approaches $0$ the better the algorithm is at converging to the desired target distribution.

The simulation results for the \textbf{low dimensional} examples show that the rejection free algorithms perform better both in terms of TVD and in speed to find the modes. This is expected as the  rejection free algorithms efficiently use the information of neighbor states to identify direct paths to the modes. In this low dimensional example the additional computations needed by the algorithms using a multiplicity list (A-IIT and MH-mult) is a lot more than the computations needed by algorithms using direct weight estimation (IIT and RF-MH). This is due to the dimension of the state space being small and $L_0$ being large in comparison. So the number of computations in-between replica swaps for algorithms with multiplicity list is always larger than the computations for the other two algorithms and the additional computations needed during the replica swap process is not significant.


It is in the simulation results for the \textbf{high dimensional} example that we observe A-IIT performing better than the other algorithms. Since the dimension of the space is $p=3000$ the computation of the probability of swapping adjacent replicas requires an additional $4\times 3000=12000$ evaluation of the function $\pi$, and since we are performing non-reversible replica swaps each step of replica swaps requires $12000 \times 6=72000$ additional computations, this impacts the speed in which the parallel tempering algorithm explores the state space. 

Additional to that we observe that the value of $L_0$ does not scale at the same rate as the dimension of the state space so, for the algorithms using multiplicity list, we are able to find an adequate choice of replicas that use the equivalent algorithm that permits rejections in such a way that we minimize the number of computations needed per replica swap.

In the simulation results, we can also see the difference in speed between A-IIT and MH-mult which can be attributed to using a different balancing function. Hence we observe the advantage of using A-IIT over Metropolis Hastings or IIT when using parallel tempering.

\subsection{Low dimensional bi-modal example}
For this example we consider a space of dimension $p=16$ with two modes:
\begin{itemize}
\setlength\itemsep{-0.5em}
    \item $x_{(1)}=(1,0,1,0,1,0,1,0,1,0,1,0,1,0,1,0)$
    \item $x_{(2)}=(0,1,0,1,0,1,0,1,0,1,0,1,0,1,0,1)$
\end{itemize}



The state space is $\cS=\{0,1\}^{16}$, we use $\theta=6$ to define the target distribution, with these parameters, the normalizing constant is $\mathfrak{C}_{\theta}=2(1+e^{-\theta})^{16}$, each mode accumulates $48\%$ of the total probability mass. Since this space is small enough we can compute the true distribution $\pi$ and use it to compute the Total Variation Distance.

We run all algorithms for a total of 1 million replica swaps, the algorithms using rejection free iterations try a replica swap after every iteration. Algorithm A-IIT uses $L_0=1000$ and algorithm MH-mult uses $L_0=100$. We run 100 simulations of each algorithm to obtain the results presented below.

For this simulations, all the replicas of A-IIT and MH-mult use the rejection free algorithm. In table \ref{tab:swap_rate_bimodal} we see the average of rejection free swaps that each replica performs in-between replica swaps with the defined $L_0$.

\begin{table}[!htbp] \centering 
  \caption{Average number of iterations between replica swaps} 
  \label{tab:avg_iter_bimodal} 
\begin{tabular}{@{\extracolsep{5pt}} ccc} 
\\[-1.8ex]\hline 
\hline \\[-1.8ex] 
replica & A-IIT & MH-mult \\ 
\hline \\[-1.8ex] 
$\beta_{1}$ & 1.246 & 1.489 \\ 
$\beta_{2}$ & 24.067 & 10.942 \\ 
$\beta_{3}$ & 91.075 & 25.027 \\ 
$\beta_{4}$ & 218.88 & 42.84 \\ 
\hline \\[-1.8ex] 
\end{tabular} 
\end{table}

In this particular problem we don't have benefits of using A-IIT as using IIT is faster both in terms of finding the modes and convergence to the target distribution. We attribute this to the small size of the state space, the number of additional computations needed by IIT both for each iteration and trying replica swaps does not hinder the performance.



\begin{figure}[htpb]
    \centering
    \includegraphics[width=0.95\linewidth]{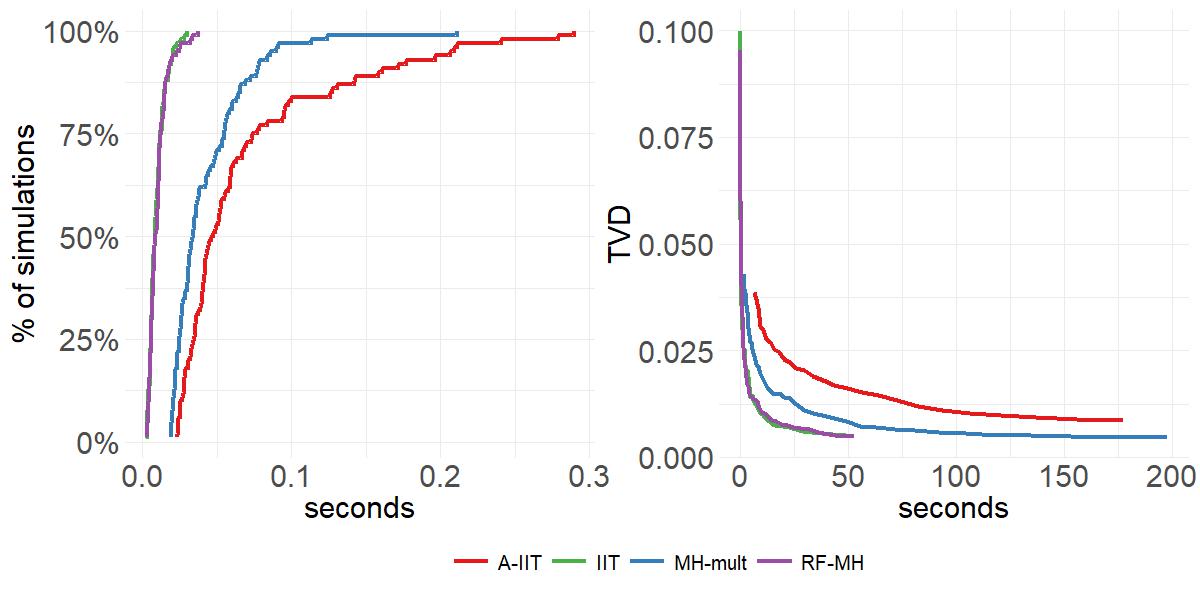}
    \caption{Comparison of performance of the four algorithms in the low dimensional bimodal problems. On the left, time to visit all the modes of the target distribution. On the right, evolution of Total Variation Distance} 
    \label{fig:lowd_bimodal_dual}
\end{figure}

\newpage
\subsection{Low dimensional multi modal example}

For this example we consider a space of dimension $p=16$ with seven modes:
\begin{itemize}
\setlength\itemsep{-0.5em}
    \item $x_{(1)}=(1,1,1,1,1,1,1,1,1,1,1,1,1,1,1,1)$
    \item $x_{(2)}=(1,0,1,0,1,0,1,0,1,0,1,0,1,0,1,0)$
    \item $x_{(3)}=(0,1,0,1,0,1,0,1,0,1,0,1,0,1,0,1)$
    \item $x_{(4)}=(1,1,1,1,1,1,1,1,0,0,0,0,0,0,0,0)$
    \item $x_{(5)}=(0,0,0,0,0,0,0,0,1,1,1,1,1,1,1,1)$
    \item $x_{(6)}=(1,0,0,0,0,0,0,0,0,0,0,0,0,0,0,1)$
    \item $x_{(7)}=(0,0,0,0,0,0,0,1,1,0,0,0,0,0,0,0)$
\end{itemize}

The state space is $\cS=\{0,1\}^{16}$, we use $\theta=10$ to define the target distribution, with these parameters, the normalizing constant is $\mathfrak{C}_{\theta}=7(1+e^{-\theta})^{16}$, all modes accumulate 99.9\% of the total probability mass. Since this space is small enough we can compute the true distribution $\pi$ and use it to compute the Total Variation Distance.

We run all algorithms for a total of 1 million replica swaps, the algorithms using rejection free iterations try a replica swap after every 2 iterations while the algorithms using a multiplicity list uses an $L_0=1000$. We run 100 simulations of each algorithm to obtain the results presented below.

For this simulations, all the replicas of A-IIT and MH-mult use the rejection free algorithm. In table \ref{tab:swap_rate_7_mode} we see the average of rejection free swaps that each replica performs in-between replica swaps with the defined $L_0$. Here we observe the effect of the adaptive constant. A bounding constant bigger than $1$ means that the probability of escape $Z_h(X)$ is smaller which then translates in bigger values for the multiplicity list, reducing the number of rejection-free iterations needed to reach the defined $L_0$.

\begin{table}[!htbp] \centering 
  \caption{Average number of iterations between replica swaps} 
  \label{tab:avg_iter_7_mode} 
\begin{tabular}{@{\extracolsep{5pt}} ccc} 
\\[-1.8ex]\hline 
\hline \\[-1.8ex] 
replica & A-IIT & MH-mult \\ 
\hline \\[-1.8ex] 
$\beta_{1}$ & 1.001 & 1.091 \\ 
$\beta_{2}$ & 19.306 & 87.393 \\ 
$\beta_{3}$ & 77.602 & 221.185 \\ 
\hline \\[-1.8ex] 
\end{tabular} 
\end{table}

We observe a similar behavior as in the bi-modal problem. Using parallel tempering with IIT gives a faster algorithm to both find all the modes and estimating the the target distribution. We consider that the benefits of using A-IIT depend on the dimension of the state space and not on the number of modes the state space has.

We don't observe any significant differences in the performance of using IIT or Rejection Free Metropolis Hastings. As the dimension of the state space is small there's no significant advantage to using a different balancing function when choosing a state to move to.



\begin{figure}[htpb]
    \centering
    \includegraphics[width=0.95\linewidth]{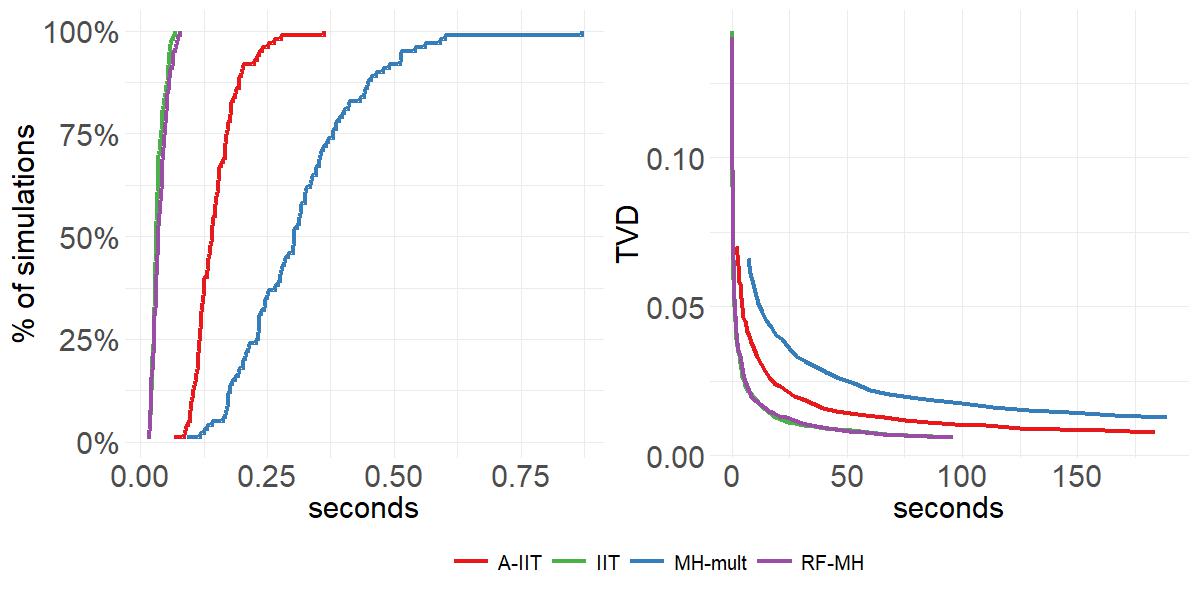}
    \caption{Comparison of performance of the four algorithms in the low dimensional multimodal problem. On the left, time to visit all the modes of the target distribution. On the right, evolution of Total Variation Distance} 
    \label{fig:lowd_multimodal_dual}
\end{figure}

\newpage

\subsection{High dimensional example}

For this example we consider a space of dimension $p=3000$ and we consider 6 modes:

\begin{itemize}
\setlength\itemsep{-0.5em}
    \item $x_{(1)}=(1,0,1,0,\dots,1,0,1,0)$
    \item $x_{(2)}=(0,1,0,1,\dots,0,1,0,1)$
    \item $x_{(3)}=(1,1,1,1,\dots,0,0,0,0)$
    \item $x_{(4)}=(0,0,0,0,\dots,1,1,1,1)$
    \item $x_{(5)}=(0,0,0,0,\dots,0,0,1,1,\dots,1,1,0,0,\dots,0,0,0,0)$
    \item $x_{(6)}=(1,1,1,1,\dots,1,1,0,0,\dots,0,0,1,1 ,\dots,1,1,1,1)$
\end{itemize}

All modes satisfy $||x_{(i)}||_{1}=\frac{p}{2}$ and are defined in such a way that $||x_{(i)}-x_{(j)}||\geq \frac{p}{2}$

The state space is $\cS=\{0,1\}^{3000}$, we use $\theta=0.001$ to define the target distribution.

We run all algorithms for a total of 1 million replica swaps, algorithms (2) and (4) try a replica swap after 2 iterations while the algorithms using a multiplicity list uses an $L_0=800$. For algorithms (1) and (3) 8 out of the 13 chains use the rejection free algorithm and the remaining 5 use the algorithm that allow rejections.  We run 100 simulations of each algorithm to obtain the results presented below.

In table \ref{tab:swap_rate_3k} we see the average number of iterations each replica performs in-between replica swaps with the defined $L_0$. For the 8 replicas using the rejection free algorithm this number corresponds to the average number of rejection-free steps performed before the multiplicity list reaches the value of $L_0$, for the other 5 replicas  the number is fixed as the chain always performs $L_0$ iterations of the algorithm before trying a replica swap.

\begin{table}[!htbp] \centering 
  \caption{Average number of iterations between replica swaps} 
  \label{tab:avg_iter_3k} 
\begin{tabular}{@{\extracolsep{5pt}} ccc} 
\\[-1.8ex]\hline 
\hline \\[-1.8ex] 
replica & A-IIT & MH-mult \\ 
\hline \\[-1.8ex] 
$\beta_{1}$ & 1.614 & 1.162 \\ 
$\beta_{2}$ & 1.809 & 1.352 \\ 
$\beta_{3}$ & 2.001 & 1.788 \\ 
$\beta_{4}$ & 2.262 & 2.587 \\ 
$\beta_{5}$ & 2.678 & 3.916 \\ 
$\beta_{6}$ & 3.235 & 5.346 \\ 
$\beta_{7}$ & 4.052 & 6.816 \\ 
$\beta_{8}$ & 5.209 & 8.588 \\ 
$\beta_{9}$ & 800 & 800 \\ 
$\beta_{10}$ & 800 & 800 \\ 
$\beta_{11}$ & 800 & 800 \\ 
$\beta_{12}$ & 800 & 800 \\ 
$\beta_{13}$ & 800 & 800 \\ 
\hline \\[-1.8ex] 
\end{tabular} 
\end{table}

For algorithms (1) and (3) the burn-in period lasts until the multiplicity list reaches the value 8000. For algorithms (2) and (4) the burn-in period lasts 50 iterations. For algorithm (1) the bounding constant is updated only during the burn-in period. After the burn-in period we stop the adaptation and use the balancing function considering the latest bounding constant found.


\begin{figure}[!h]
    \centering
    \includegraphics[width=0.95\linewidth]{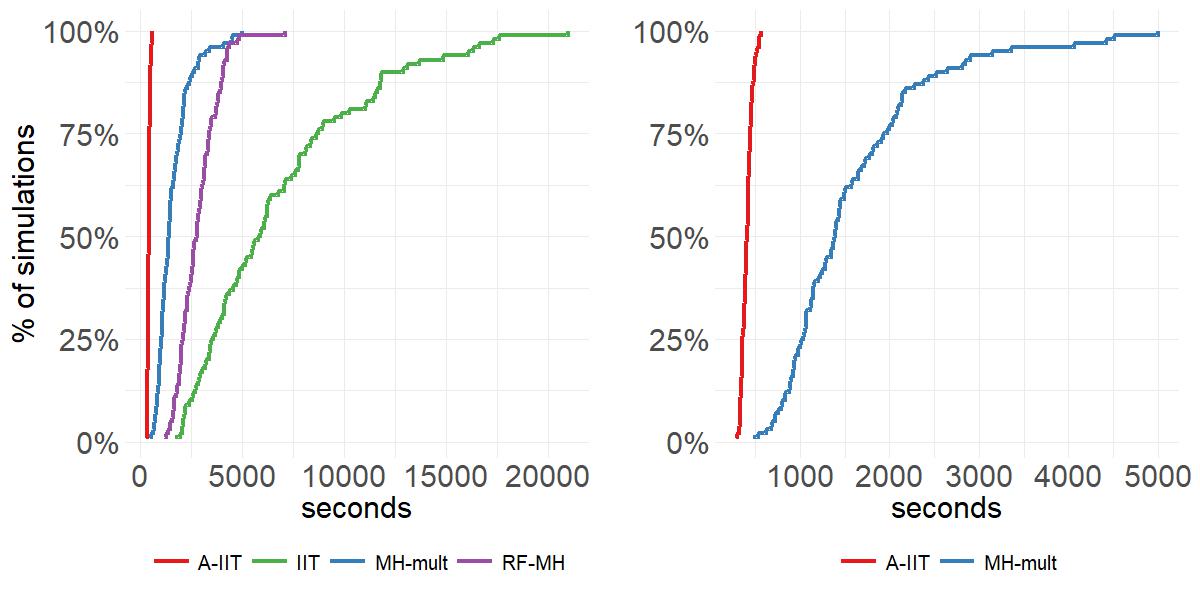}
    \caption{Comparison of time to visit all the modes of the target distribution in a space of dimension $p=3000$. On the right is the same plot only focusin on algorithms A-IIT and MH-mult} 
    \label{fig:highdim_3k_dual}
\end{figure}

The left plot of figure \ref{fig:highdim_3k_dual} presents the time each algorithm took to find all the modes. We can observe that A-IIT and MH-mult, the algorithms that use a multiplicity list, are faster than IIT and RF-MH, the rejection-free algorithms. The time for the algorithms using a multiplicity list could be improved in this case by using SS-IIT in more chains..

The right plot of figure \ref{fig:highdim_3k_dual} we observe the plot only for algorithms (1) and (3). Here we can better appreciate the difference in time. Adaptive IIT is 3 times faster than Rejection Free Metropolis-Hastings. We can attribute the difference in performance to the different balancing function used since that's the main difference between the algorithms.


In figure \ref{fig:highdim_bounds_3k} we can see the plot of the different bounds found for each replica in the simulations. As we can see with the short burn-in period the algorithm was able to find different bounding constants for each replica. When choosing a state for the chain to move algorithm (3) assigns the same probability to all the states with higher probability than the current state, while algorithm (1) may assign different probabilities. This means that algorithm (1) will move quicker to states with high probability and in this setting where the target distribution is multimodal and each mode has a path of high probability leading to it, makes algorithm (1) reach the mode faster.

\begin{figure}[!h]
    \centering
    \includegraphics[width=0.85\linewidth]{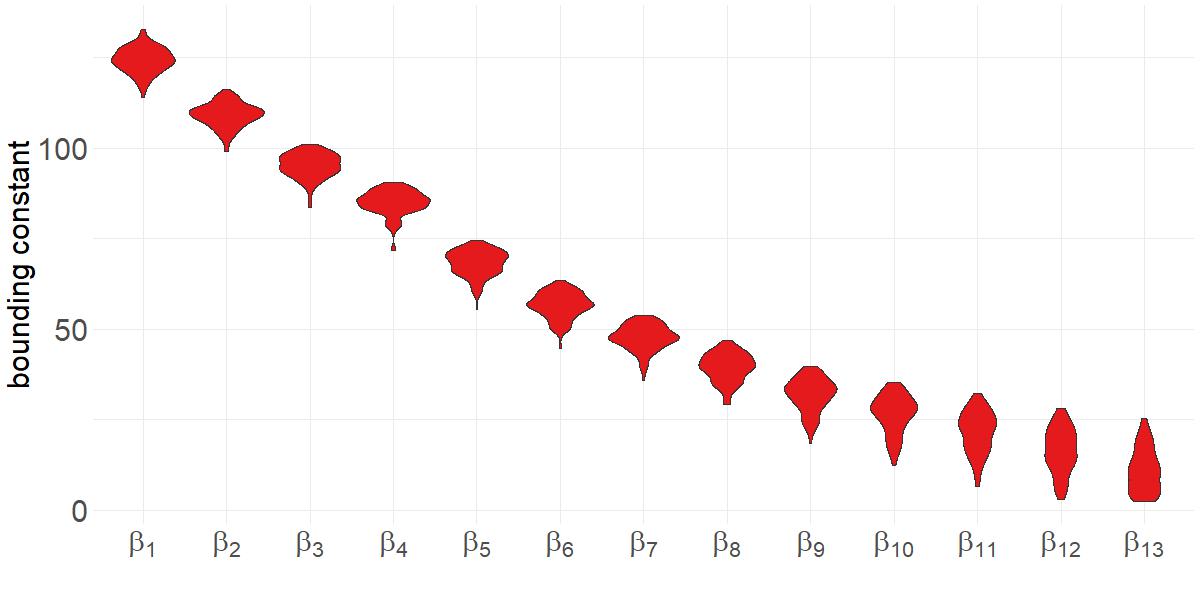}
    \caption{Summary of bounds used for each chain after the burn-in period for the 100 simulations.} 
    \label{fig:highdim_bounds_3k}
\end{figure}

In figure \ref{fig:compare-time-to-mode} we ran similar simulations for algorithms (1) and (3) keeping track of the time it takes for the algorithms to find all the modes but with different number of replicas using the rejection free algorithm. In the x-axis is the number of replicas that used the rejection free algorithm, in the y-axis is the time in seconds it took for the simulations to find the modes.

If we increase the number of replicas using a rejection free algorithm the time to find the modes increases, this is because each iteration requires more computations. However we can see that in all cases A-IIT is faster than MH-mult. This shows the advantage of using $h(r)=\sqrt{r}$ as the balancing function.

\begin{figure}[htbp]
\centering
\includegraphics[width=0.49\textwidth]{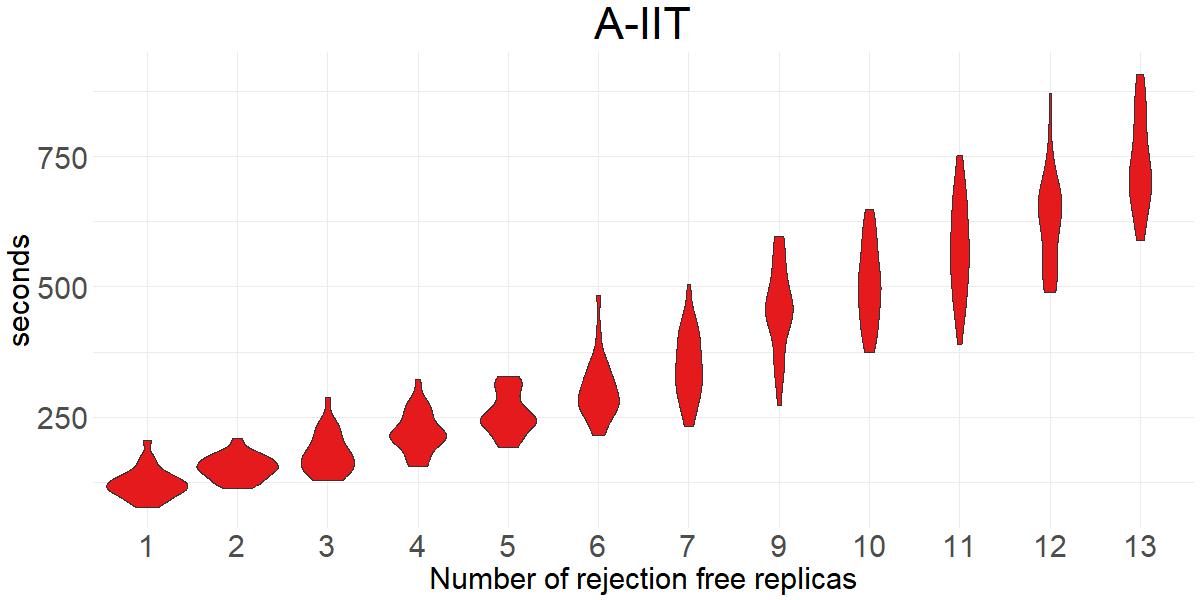}
\hfill
\includegraphics[width=0.49\textwidth]{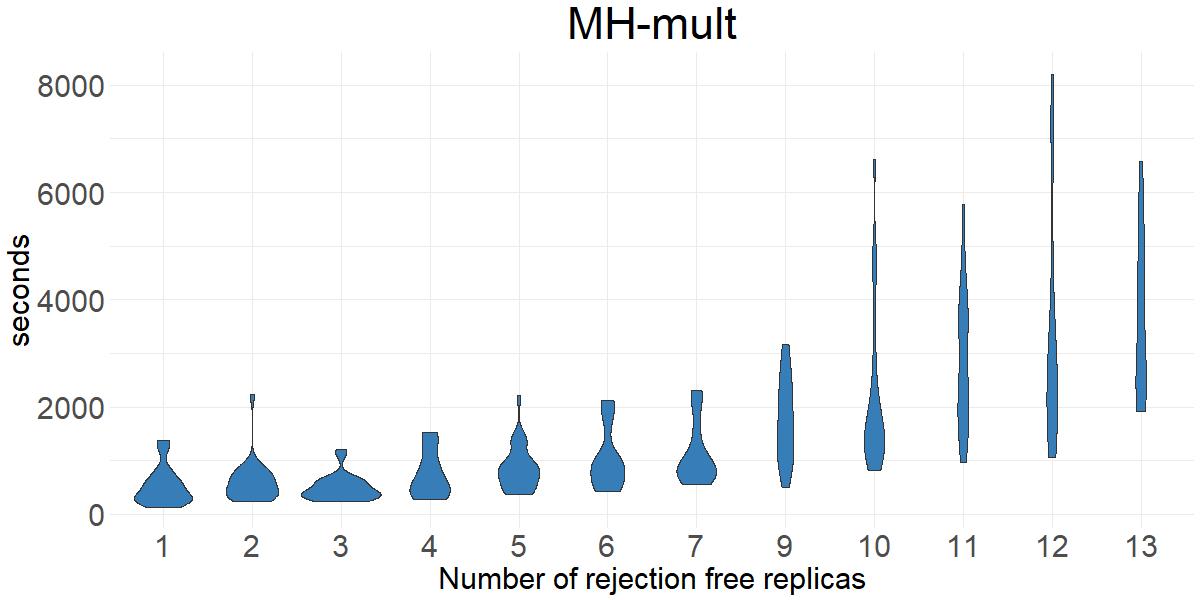}
\caption{Comparison of time to find all modes for different number of rejection free replicas in 100 simulations.}
\label{fig:compare-time-to-mode}
\end{figure}

\newpage
\section{Summary and future work}

This work presented an adaptive MCMC algorithm for discrete state spaces that uses information of the neighbors of the current state to "temper" the choice of the next state to visit. We present both a rejection free version and what we call a single-step version of the algorithm.

We prove the convergence of these algorithms to the desired target distribution, and we show that they are equivalent in the sense that both converge to the same target distribution. This equivalence gives flexibility to the implementation with parallel tempering as we can choose the algorithm that more efficiently uses the available computational resources although the choice will depend on the specific problem to solve.

There are aspects of the algorithm that we don't explore in this work and can be the topic of future research. 
\begin{itemize}
    \item The choice of importance function $f$ to define the bounded balancing function: As was mentioned in the original work of \citet{IIT-Li2023} the choice of balancing functions is problem specific. 
    \item The schedule of adaptations: We mentioned the possibility of reducing the bounding constant towards 1 so we can consider a first stage of constantly updating the constant to improve the exploration of the state space and later reduce it to reduce the variance of the Monte Carlo estimator $\widehat{\E_{\pi}(X)}$
    \item The optimal choice of rejection free or single step replicas: In this work we made a choice on the number of rejection free replicas to use for the simulations trying to use half and half but there's an opportunity to define the optimal number of replicas for each algorithm.
    \item Extensions to other algorithms: In the work of \citet{IIT-Li2023} it's mentioned that IIT allows for other modifications for example to consider a smaller neighbor set at each iteration. We consider that A-IIT can also be adapted in a similar way.
\end{itemize}
\hspace{1cm}

{\small 

\noindent \textbf{Acknowledgments } Thanks to Norma Garcia-Salazar and Marco Gallegos-Herrada for helpful discussions and remarks. This research was enabled in part by support provided by Compute Ontario (computeontario.ca) and the Digital Research Alliance of Canada (alliancecan.ca).

\noindent \textbf{Funding } The authors thank Fujitsu Limited and Fujitsu Consulting (Canada) Inc. for providing financial support.

\noindent \textbf{Conflict of Interest } The authors declare no competing interests.
}

\newpage
{\small
\bibliographystyle{apacite}
\bibliography{references}
} 
\appendix
\section*{Appendix}
\renewcommand{\thesubsection}{\Alph{subsection}}

\subsection{Details on the simulations}\label{app:sim_details}
Here we present more details on the simulations. Specifically the values of $\beta$ used, the average swap rate and we present other results obtained in spaces of dimension 1000, 5000 and 7000.
\newpage
\subsubsection{Low dimensional examples}

\begin{table}[!htbp] \centering 
  \caption{Inverse temperature used in the bimodal model} 
  \label{tab:temperatures_bimodal} 
\begin{tabular}{@{\extracolsep{5pt}} ccccc} 
\\[-1.8ex]\hline 
\hline \\[-1.8ex] 
temp\_id & A-IIT & MH-mult & IIT & RF-MH \\ 
\hline \\[-1.8ex] 
$\beta_{1}$ & 1 & 1 & 1 & 1 \\ 
$\beta_{2}$ & 0.49 & 0.49 & 0.26 & 0.26 \\ 
$\beta_{3}$ & 0.33 & 0.33 & 0.06 & 0.06 \\ 
$\beta_{4}$ & 0.22 & 0.22 & NA & NA \\ 
\hline \\[-1.8ex] 
\end{tabular} 
\end{table}

\begin{table}[!htbp] \centering 
  \caption{Average swap rate between replica $\beta_{i}$ and $\beta_{i+1}$ for the bimodal model} 
  \label{tab:swap_rate_bimodal} 
\begin{tabular}{@{\extracolsep{5pt}} ccccc} 
\\[-1.8ex]\hline 
\hline \\[-1.8ex] 
replica & A-IIT & IIT & MH-mult & RF-MH \\ 
\hline \\[-1.8ex] 
$\beta_{1}$ & 0.2352 & 0.2382 & 0.2352 & 0.2382 \\ 
$\beta_{2}$ & 0.2423 & 0.2327 & 0.2423 & 0.2326 \\ 
$\beta_{3}$ & 0.2521 & NA & 0.2521 & NA \\ 
\hline \\[-1.8ex] 
\end{tabular} 
\end{table}

\begin{table}[!htbp] \centering 
  \caption{Inverse temperature used in the 7 modes model} 
  \label{tab:temperatures_7_mode} 
\begin{tabular}{@{\extracolsep{5pt}} ccccc} 
\\[-1.8ex]\hline 
\hline \\[-1.8ex] 
temp\_id & A-IIT & MH-mult & IIT & RF-MH \\ 
\hline \\[-1.8ex] 
$\beta_{1}$ & 1 & 1 & 1 & 1 \\ 
$\beta_{2}$ & 0.31 & 0.31 & 0.15 & 0.155 \\ 
$\beta_{3}$ & 0.21 & 0.21 & 0.002 & 0.002 \\ 
\hline \\[-1.8ex] 
\end{tabular} 
\end{table}

\begin{table}[!htbp] \centering 
  \caption{Average swap rate between replica $\beta_{i}$ and $\beta_{i+1}$ for the 7 modes model} 
  \label{tab:swap_rate_7_mode} 
\begin{tabular}{@{\extracolsep{5pt}} ccccc} 
\\[-1.8ex]\hline 
\hline \\[-1.8ex] 
replica & A-IIT & IIT & MH-mult & RF-MH \\ 
\hline \\[-1.8ex] 
$\beta_{1}$ & 0.2478 & 0.2327 & 0.2477 & 0.2506 \\ 
$\beta_{2}$ & 0.2521 & 0.2539 & 0.2521 & 0.2391 \\ 
\hline \\[-1.8ex] 
\end{tabular} 
\end{table}

\newpage
\subsubsection{High dimensional example}

\begin{table}[!htbp] \centering 
  \caption{Inverse temperature used in the problem of dimension 3k} 
  \label{tab:temperatures_3k} 
\begin{tabular}{@{\extracolsep{5pt}} ccccc} 
\\[-1.8ex]\hline 
\hline \\[-1.8ex] 
temp\_id & A-IIT & IIT & MH-mult & RF-MH \\ 
\hline \\[-1.8ex] 
$\beta_{1}$ & 20000 & 20000 & 20000 & 20000 \\ 
$\beta_{2}$ & 19517 & 15046 & 17899 & 15005 \\ 
$\beta_{3}$ & 19029 & 13509 & 15895 & 13611 \\ 
$\beta_{4}$ & 18535 & 12162 & 14353 & 12684 \\ 
$\beta_{5}$ & 17744 & 11215 & 13057 & 12182 \\ 
$\beta_{6}$ & 17125 & 10570 & 12234 & 11600 \\ 
$\beta_{7}$ & 16568 & 10087 & 11631 & 11377 \\ 
$\beta_{8}$ & 16037 & 9716 & 11093 & 11185 \\ 
$\beta_{9}$ & 15571 & 9396 & 10578 & 11090 \\ 
$\beta_{10}$ & 15273 & 9166 & 10109 & 10986 \\ 
$\beta_{11}$ & 15075 & 9001 & 9409 & 10892 \\ 
$\beta_{12}$ & 14786 & 8827 & 8951 & 10802 \\ 
$\beta_{13}$ & 14595 & 8691 & 8417 & 10735 \\ 
\hline \\[-1.8ex] 
\end{tabular} 
\end{table}

\begin{table}[!htbp] \centering 
  \caption{Average swap rate between replica $\beta_{i}$ and $\beta_{i+1}$ in space of dimension 3000} 
  \label{tab:swap_rate_3k} 
\begin{tabular}{@{\extracolsep{5pt}} ccccc} 
\\[-1.8ex]\hline 
\hline \\[-1.8ex] 
replica & A-IIT & IIT & MH-mult & RF-MH \\ 
\hline \\[-1.8ex] 
$\beta_{1}$ & 0.4489 & 0.2509 & 0.3759 & 0.2161 \\ 
$\beta_{2}$ & 0.4582 & 0.342 & 0.2893 & 0.3295 \\ 
$\beta_{3}$ & 0.4333 & 0.2239 & 0.2381 & 0.3212 \\ 
$\beta_{4}$ & 0.4009 & 0.222 & 0.1943 & 0.4041 \\ 
$\beta_{5}$ & 0.3953 & 0.2743 & 0.2374 & 0.2964 \\ 
$\beta_{6}$ & 0.3859 & 0.3078 & 0.2604 & 0.4368 \\ 
$\beta_{7}$ & 0.3664 & 0.3336 & 0.2586 & 0.4125 \\ 
$\beta_{8}$ & 0.3505 & 0.3144 & 0.2383 & 0.4716 \\ 
$\beta_{9}$ & 0.3575 & 0.3507 & 0.2287 & 0.386 \\ 
$\beta_{10}$ & 0.3942 & 0.4006 & 0.0984 & 0.3795 \\ 
$\beta_{11}$ & 0.3503 & 0.3268 & 0.1639 & 0.3212 \\ 
$\beta_{12}$ & 0.3703 & 0.4098 & 0.0946 & 0.4413 \\ 
\hline \\[-1.8ex] 
\end{tabular} 
\end{table} 

\newpage
\subsection{Other results in high dimensions}\label{app:other_resuts}
We chose to present the results of the simulations of a problem in a state space of dimension 3000 as this showed clearly the advantages of A-IIT over other algorithms. Here we share some other results obtained from simulations run with similar target distributions $\pi$ with the same number of modes but in other dimensions.

Although the results presented for the state space of dimension 3000 are not replicated in dimension 5000, we observe similar results for dimension 7000. We consider there is an opportunity to optimize some parameters in dimension 5000 so that we can show that A-IIT performs better than Metropolis Hastings

\subsubsection{Dimension 1000}

\begin{figure}[!h]
    \centering
    \includegraphics[width=0.85\linewidth]{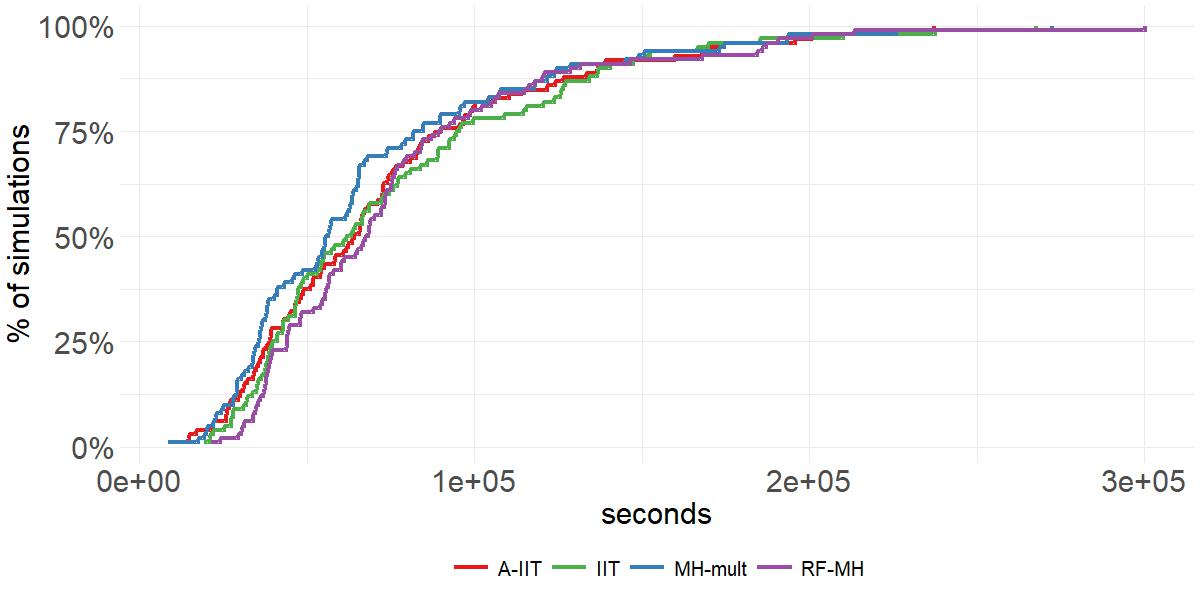}
    \caption{Comparison of time to visit all the modes of the target distribution in a space of dimension $p=1000$.} 
    \label{fig:highdim_speed_mode_1k}
\end{figure}

\begin{table}[!htbp] \centering 
  \caption{Inverse temperature used in the problem of dimension 1k} 
  \label{tab:temperatures_1k} 
\begin{tabular}{@{\extracolsep{5pt}} ccccc} 
\\[-1.8ex]\hline 
\hline \\[-1.8ex] 
temp\_id & A-IIT & IIT & MH-mult & RF-MH \\ 
\hline \\[-1.8ex] 
$\beta_{1}$ & 20000 & 20000 & 20000 & 20000 \\ 
$\beta_{2}$ & 19209 & 18092 & 19175 & 18202 \\ 
$\beta_{3}$ & 18544 & 16632 & 18496 & 16829 \\ 
$\beta_{4}$ & 18030 & 15661 & 17962 & 15769 \\ 
$\beta_{5}$ & 17441 & 14723 & 17333 & 14902 \\ 
$\beta_{6}$ & 16904 & 13981 & 16728 & 14075 \\ 
$\beta_{7}$ & 16480 & 13544 & 16322 & 13582 \\ 
$\beta_{8}$ & 15969 & 12819 & 15876 & 12999 \\ 
$\beta_{9}$ & 15483 & 12279 & 15397 & 12419 \\ 
$\beta_{10}$ & 15115 & 11806 & 15052 & 11994 \\ 
$\beta_{11}$ & 14768 & 11372 & 14716 & 11472 \\ 
$\beta_{12}$ & 14376 & 10970 & 14393 & 11043 \\ 
$\beta_{13}$ & 13910 & 10484 & 13968 & 10691 \\ 
\hline \\[-1.8ex] 
\end{tabular} 
\end{table}

\begin{table}[!htbp] \centering 
  \caption{Average swap rate between replica $\beta_{i}$ and $\beta_{i+1}$ in space of dimension 1000} 
  \label{tab:swap_rate_1k} 
\begin{tabular}{@{\extracolsep{5pt}} ccccc} 
\\[-1.8ex]\hline 
\hline \\[-1.8ex] 
replica & A-IIT & IIT & MH-mult & RF-MH \\ 
\hline \\[-1.8ex] 
$\beta_{1}$ & 0.333 & 0.2655 & 0.3258 & 0.2971 \\ 
$\beta_{2}$ & 0.345 & 0.2824 & 0.3408 & 0.3201 \\ 
$\beta_{3}$ & 0.369 & 0.3951 & 0.3631 & 0.3636 \\ 
$\beta_{4}$ & 0.3397 & 0.3454 & 0.3275 & 0.3922 \\ 
$\beta_{5}$ & 0.3417 & 0.3968 & 0.3199 & 0.3537 \\ 
$\beta_{6}$ & 0.3651 & 0.5849 & 0.3678 & 0.5423 \\ 
$\beta_{7}$ & 0.3287 & 0.3205 & 0.3468 & 0.4315 \\ 
$\beta_{8}$ & 0.3254 & 0.4131 & 0.3255 & 0.3909 \\ 
$\beta_{9}$ & 0.3576 & 0.4338 & 0.3648 & 0.4932 \\ 
$\beta_{10}$ & 0.3583 & 0.4378 & 0.3617 & 0.3612 \\ 
$\beta_{11}$ & 0.3324 & 0.4397 & 0.3603 & 0.4164 \\ 
$\beta_{12}$ & 0.2914 & 0.3085 & 0.3092 & 0.4715 \\ 
\hline \\[-1.8ex] 
\end{tabular} 
\end{table} 

\newpage
\subsubsection{Dimension 5000}

\begin{figure}[htpb]
    \centering
    \includegraphics[width=0.95\linewidth]{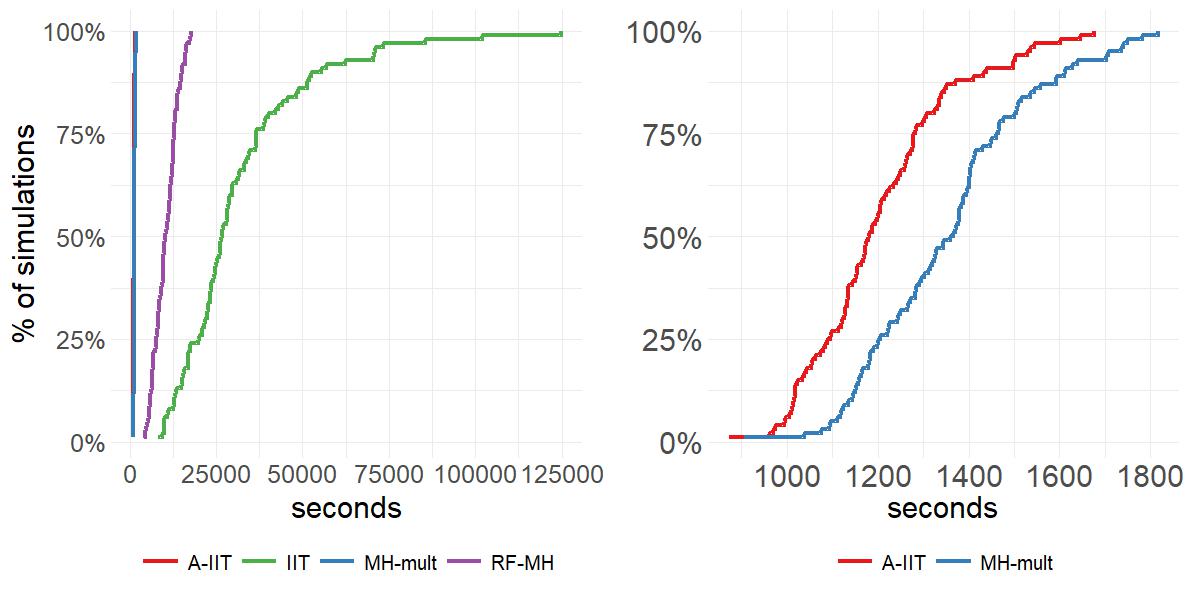}
    \caption{Comparison of time to visit all the modes of the target distribution in a space of dimension $p=5000$. On the right is the same plot only focusin on A-IIT and MH-mult.} 
    \label{fig:highdim_5k_dual}
\end{figure}

\begin{table}[!htbp] \centering 
  \caption{Inverse temperature used in the problem of dimension 5k} 
  \label{tab:temperatures_5k} 
\begin{tabular}{@{\extracolsep{5pt}} ccccc} 
\\[-1.8ex]\hline 
\hline \\[-1.8ex] 
temp\_id & A-IIT & IIT & MH-mult & RF-MH \\ 
\hline \\[-1.8ex] 
$\beta_{1}$ & 20000 & 20000 & 20000 & 20000 \\ 
$\beta_{2}$ & 19813 & 11152 & 19199 & 10929 \\ 
$\beta_{3}$ & 19524 & 9798 & 18256 & 9985 \\ 
$\beta_{4}$ & 19387 & 8698 & 17010 & 9369 \\ 
$\beta_{5}$ & 19254 & 7903 & 16246 & 8908 \\ 
$\beta_{6}$ & 19135 & 7290 & 15526 & 8538 \\ 
$\beta_{7}$ & 19075 & 6819 & 14787 & 8233 \\ 
$\beta_{8}$ & 18891 & 6610 & 14075 & 7963 \\ 
$\beta_{9}$ & 18680 & 6427 & 13404 & 7723 \\ 
$\beta_{10}$ & 18616 & 6274 & 12786 & 7523 \\ 
$\beta_{11}$ & 18536 & 6125 & 12388 & 7422 \\ 
$\beta_{12}$ & 18454 & 6007 & 11783 & 7299 \\ 
$\beta_{13}$ & 18421 & 5889 & 11230 & 7208 \\ 
\hline \\[-1.8ex] 
\end{tabular} 
\end{table}

\begin{table}[!htbp] \centering 
  \caption{Average swap rate between replica $\beta_{i}$ and $\beta_{i+1}$ in space of dimension 5000} 
  \label{tab:swap_rate_5k} 
\begin{tabular}{@{\extracolsep{5pt}} ccccc} 
\\[-1.8ex]\hline 
\hline \\[-1.8ex] 
replica & A-IIT & IIT & MH-mult & RF-MH \\ 
\hline \\[-1.8ex] 
$\beta_{1}$ & 0.4168 & 0.352 & 0.3636 & 0.2386 \\ 
$\beta_{2}$ & 0.3777 & 0.3035 & 0.3773 & 0.3558 \\ 
$\beta_{3}$ & 0.4079 & 0.171 & 0.3478 & 0.3633 \\ 
$\beta_{4}$ & 0.3869 & 0.1459 & 0.3838 & 0.3457 \\ 
$\beta_{5}$ & 0.3624 & 0.1368 & 0.3722 & 0.3313 \\ 
$\beta_{6}$ & 0.3943 & 0.147 & 0.3776 & 0.3219 \\ 
$\beta_{7}$ & 0.3077 & 0.3699 & 0.3593 & 0.269 \\ 
$\beta_{8}$ & 0.2868 & 0.3774 & 0.3591 & 0.2445 \\ 
$\beta_{9}$ & 0.3337 & 0.3816 & 0.3461 & 0.2021 \\ 
$\beta_{10}$ & 0.3391 & 0.3427 & 0.3786 & 0.244 \\ 
$\beta_{11}$ & 0.3133 & 0.3729 & 0.306 & 0.1981 \\ 
$\beta_{12}$ & 0.3232 & 0.3278 & 0.294 & 0.2217 \\ 
\hline \\[-1.8ex] 
\end{tabular} 
\end{table}

\newpage
\subsubsection{Dimension 7000}

\begin{figure}[htpb]
    \centering
    \includegraphics[width=0.9\linewidth]{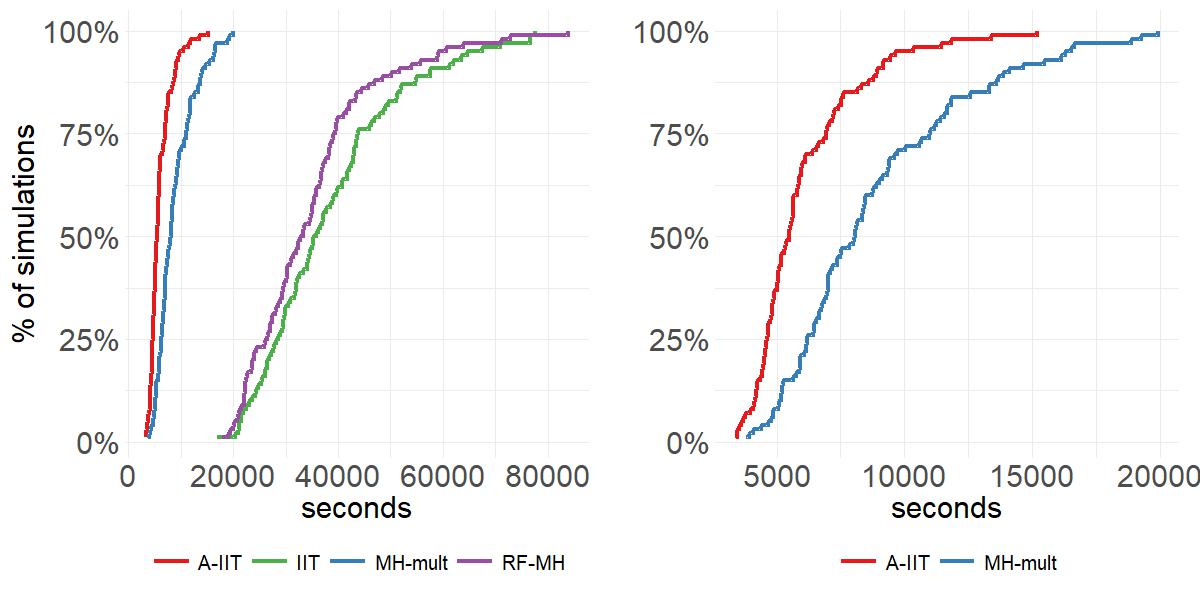}
    \caption{Comparison of time to visit all the modes of the target distribution in a space of dimension $p=7000$. On the right is the same plot only focus in on A-IIT and MH-mult.} 
    \label{fig:highdim_7k_dual}
\end{figure}

\begin{table}[!htbp] \centering 
  \caption{Inverse temperature used in the problem of dimension 7k} 
  \label{tab:temperatures_7k} 
\begin{tabular}{@{\extracolsep{5pt}} ccccc} 
\\[-1.8ex]\hline 
\hline \\[-1.8ex] 
temp\_id & A-IIT & IIT & MH-mult & RF-MH \\ 
\hline \\[-1.8ex] 
$\beta_{1}$ & 20000 & 20000 & 20000 & 20000 \\ 
$\beta_{2}$ & 11710 & 10008 & 11457 & 9951 \\ 
$\beta_{3}$ & 9796 & 8637 & 9845 & 8781 \\ 
$\beta_{4}$ & 9154 & 7937 & 9005 & 8063 \\ 
$\beta_{5}$ & 8692 & 7426 & 8348 & 7557 \\ 
$\beta_{6}$ & 8327 & 7024 & 7813 & 7135 \\ 
$\beta_{7}$ & 8023 & 6751 & 7377 & 6870 \\ 
$\beta_{8}$ & 7738 & 6521 & 7049 & 6649 \\ 
$\beta_{9}$ & 7490 & 6310 & 6741 & 6483 \\ 
$\beta_{10}$ & 7274 & 6185 & 6472 & 6371 \\ 
$\beta_{11}$ & 7069 & 6110 & 6239 & 6301 \\ 
$\beta_{12}$ & 6878 & 6032 & 6029 & 6252 \\ 
$\beta_{13}$ & 6713 & 5996 & 5858 & 6214 \\ 
\hline \\[-1.8ex] 
\end{tabular} 
\end{table}

\begin{table}[!htbp] \centering 
  \caption{Average swap rate between replica $\beta_{i}$ and $\beta_{i+1}$ in space of dimension 7000} 
  \label{tab:swap_rate_7k} 
\begin{tabular}{@{\extracolsep{5pt}} ccccc} 
\\[-1.8ex]\hline 
\hline \\[-1.8ex] 
replica & A-IIT & IIT & MH-mult & RF-MH \\ 
\hline \\[-1.8ex] 
$\beta_{1}$ & 0.3965 & 0.363 & 0.3854 & 0.336 \\ 
$\beta_{2}$ & 0.2348 & 0.206 & 0.2628 & 0.256 \\ 
$\beta_{3}$ & 0.3199 & 0.2414 & 0.2614 & 0.2394 \\ 
$\beta_{4}$ & 0.3341 & 0.2253 & 0.2418 & 0.2287 \\ 
$\beta_{5}$ & 0.329 & 0.217 & 0.2207 & 0.1984 \\ 
$\beta_{6}$ & 0.3428 & 0.2423 & 0.2226 & 0.2348 \\ 
$\beta_{7}$ & 0.3314 & 0.2242 & 0.2412 & 0.2136 \\ 
$\beta_{8}$ & 0.3336 & 0.1926 & 0.2274 & 0.2091 \\ 
$\beta_{9}$ & 0.3264 & 0.2245 & 0.2272 & 0.2153 \\ 
$\beta_{10}$ & 0.3182 & 0.257 & 0.2362 & 0.2371 \\ 
$\beta_{11}$ & 0.3081 & 0.221 & 0.23 & 0.233 \\ 
$\beta_{12}$ & 0.3226 & 0.295 & 0.2623 & 0.2666 \\ 
\hline \\[-1.8ex] 
\end{tabular} 
\end{table}

\newpage 
\subsection{Algorithms}\label{app:algorithms}

\begin{algorithm}
\caption{Metropolis Hastings with a multiplicity list}
\label{alg:rejection-free-metropolis-hastings-mult}
\begin{algorithmic}
\State initialize $X_0$
\For{$k$ in $1$ to $K$}
\State compute
    $Z(X_{k-1}) = \sum_{Y \in \mathcal{N}_{X_{k-1}}}\mathcal{Q}(Y|X_{k-1})\min\Bigg\{1, \frac{\pi(Y)\mathcal{Q}(X_{k-1}|Y)}{\pi(X_{k-1})\mathcal{Q}(Y|X_{k-1})}\Bigg\}$
\State simulate multiplicity list $M_{k-1} \gets 1 + G$ where $G \sim \text{Geometric}(Z(X_{k-1}))$ with 
\State Set $\color{blue}w_{k-1}(X_{k-1})\gets M_{k-1}$
    \State choose the next state $X_{k} \in \mathcal{N}_{X_{k-1}}$ such that 
    {
\abovedisplayskip=-2pt
\belowdisplayskip=-2pt
$$\widehat{\mathbf{P}}(X_{k} = Y \mid X_{k-1}) \propto \mathcal{Q}(Y|X_{k-1})\min\Bigg\{1, \frac{\pi(Y) \mathcal{Q}(X_{k-1}|Y)}{\pi(X_{k-1}) \mathcal{Q}(Y|X_{k-1})}\Bigg\}$$
}
    
\EndFor
\end{algorithmic}
\end{algorithm}

\begin{algorithm}
\caption{Rejection-Free Metropolis Hastings}
\label{alg:rejection-free-metropolis-hastings-weight}
\begin{algorithmic}
\State initialize $X_0$
\For{$k$ in $1$ to $K$}
\State compute
    $Z(X_{k-1}) = \sum_{Y \in \mathcal{N}_{X_{k-1}}}\mathcal{Q}(Y|X_{k-1})\min\Bigg\{1, \frac{\pi(Y)\mathcal{Q}(X_{k-1}|Y)}{\pi(X_{k-1})\mathcal{Q}(Y|X_{k-1})}\Bigg\}$
\State Set $\color{blue} w_{k-1}(X_{k-1})\gets \frac{1}{Z(X_{k-1})}$
    \State choose the next state $X_{k} \in \mathcal{N}_{X_{k-1}}$ such that 
    {
\abovedisplayskip=-2pt
\belowdisplayskip=-2pt
    $$\widehat{\mathbf{P}}(X_{k} = Y \mid X_{k-1}) \propto \mathcal{Q}(Y|X_{k-1})\min\Bigg\{1, \frac{\pi(Y) \mathcal{Q}(X_{k-1}|Y)}{\pi(X_{k-1}) \mathcal{Q}(Y|X_{k-1})}\Bigg\}$$
    }
\EndFor
\end{algorithmic}
\end{algorithm}

\begin{algorithm}
\caption{Metropolis-Hastings algorithm}\label{alg:metropolis-hastings}
\begin{algorithmic}
\State initialize $X_0$
\For{$k$ in $1$ to $K$}
    \State random $Y \in \mathcal{N}_{X_{k-1}}$ based on $\mathcal{Q}(\cdot|X_{k-1})$
    \State Set $\color{blue} w_{k-1}(X_{k-1})\gets 1$
    \State $X_{k} \gets Y$ with probability $\frac{\pi(Y) \mathcal{Q}(Y, X_{k-1})}{\pi(X_{k-1}) \mathcal{Q}(X_{k-1}, Y)}$ \Comment{Accept proposed move to $Y$}
    \State Otherwise $X_{k} \gets X_{k-1}$    \Comment{Reject and stay at $X_{k-1}$}

\EndFor
\end{algorithmic}
\end{algorithm}

\end{document}